\def\d{\mathrm{d}}
\newcommand{\D}{\mathcal {D}}
\newcommand{\E}{\mathbb{E}}
\newcommand{\F}{\mathcal{F}}
\newcommand{\R}{\mathbb{R}}
\newcommand{\N}{\mathbb{N}}
\newcommand{\p}{\mathbb{P}}
\newcommand{\Q}{\mathcal{Q}}
\newcommand{\W}{\mathcal{W}}
\newcommand{\Y}{\mathcal{Y}}
\newcommand{\id}{\mathds{1}}
\renewcommand{\(}{\left(}
\renewcommand{\)}{\right)}
\renewcommand{\[}{\left[}
\renewcommand{\]}{\right]}
\renewcommand{\le}{\leqslant}
\renewcommand{\geq}{\geqslant}
\renewcommand{\leq}{\leqslant}
\renewcommand{\epsilon}{\varepsilon}
\renewcommand{\cdots}{\dots}
\theoremstyle{plain}
\newtheorem{theorem}{Theorem}
\newtheorem{lemma}{Lemma}
\newtheorem{proposition}{Proposition}
\theoremstyle{definition}
\newtheorem{definition}{Definition}
\newtheorem{example}{Example}
\newtheorem{assumption}{Assumption}
\theoremstyle{remark}
\newtheorem{remark}{Remark}
\newcommand{\cet}{\begin{center}}
	\newcommand{\ecet}{\end{center}}
\begin{document}
	
\title{Stochastic Dominance Constrained Optimization with S-shaped Utilities: Poor-Performance-Region Algorithm and Neural Network} 

\author{
    Zeyun Hu\thanks{\scriptsize School of Science and Engineering, The Chinese University of Hong Kong (Shenzhen), Shenzhen, China. Email: \texttt{zeyunhu@link.cuhk.edu.cn}}
    \and
	Yang Liu\thanks{\scriptsize Corresponding author. School of Science and Engineering, The Chinese University of Hong Kong (Shenzhen), Shenzhen, China. Email: \texttt{yangliu16@cuhk.edu.cn}}
}

\date{}
\maketitle

\begin{abstract}	
	
We investigate the static portfolio selection problem of S-shaped (non-concave) utility maximization under first-order and second-order stochastic dominance (SD) constraints. In many S-shaped utility optimization problems, one should require a liquidation boundary to guarantee the existence of a finite concave envelope function. A first-order SD (FSD) constraint can replace this requirement and provide an alternative for risk management. We explicitly solve the optimal solution under a general S-shaped utility function with a first-order stochastic dominance constraint. However, the second-order SD (SSD) constrained problem under non-concave utilities is difficult to solve analytically due to the invalidity of Sion's max-min theorem. 
For this reason, we propose a numerical algorithm to obtain a plausible and sub-optimal solution for general non-concave utilities. The key idea is to detect the poor performance region with respect to the SSD constraints, characterize its structure and modify the distribution on that region to obtain (sub-)optimality. A key financial insight is that the decision maker should follow the SD constraint on the poor performance scenario while conducting the unconstrained optimal strategy otherwise. We provide numerical experiments to show that our algorithm effectively finds a sub-optimal solution in many cases. Finally, we develop an algorithm-guided piecewise-neural-network framework to learn the solution of the SSD problem, which demonstrates accelerated convergence compared to standard neural network approaches.

	

	{\bf Keywords:} non-concave utility, portfolio selection, risk constraints, first-order stochastic dominance (FSD), second-order stochastic dominance (SSD), poor performance region, numerical algorithm, neural network
\end{abstract}

\section{Introduction}

S-shaped utility functions, formalized in cumulative prospect theory (\cite{TK1992}) and surveyed in behavioral finance (\cite{BT2003}), capture two salient features of the investor behavior: risk seeking in losses and risk aversion in gains. In portfolio selection problems, this non-concavity can induce aggressive tail-risk taking (\cite{C2000,HK2018,CHN2019,DZ2020,DKQW2022,LLZ2025}). In particular, when the pricing kernel has a heavy right tail, unconstrained S-shaped optimization typically generates a heavy left tail in terminal wealth, leading to large probabilities of extreme losses, which are the risks that classical VaR/ES-type constraints do not reliably mitigate.

In the field of optimization, stochastic dominance (SD) constraints are widely applied to portfolio selection problems with risk management (\cite{R1970,DR2003,DR2006,WX2021,WWX2024}). In this paper, we adopt stochastic dominance constraints as an implementable approach to control such downside risk under non-concave utilities. First-order SD (FSD) relative to a benchmark wealth $X_0$ enforces a quantile-by-quantile floor, offering a distribution-level safety guarantee that often aligns better with practice than hard liquidation boundaries. Second-order SD (SSD) restricts cumulative quantiles, providing a flexible, model-free form of downside protection from the pathological risk-seeking induced by the convex (loss) region of S-shaped preferences.

Analytically, SD-constrained utility maximization is well investigated for strictly concave utilities via quantile reformulations, duality, and saddle-point arguments (\cite{WX2021}). However, extending these tools to non-concave utilities under SSD is challenging. Sion’s max-min theorem no longer applies in the key saddle-point step, and the concavification principle is not guaranteed to be valid because optimizers can fall in regions where the utility and its concave envelope differ (e.g., \cite{BS2014,DKQW2022,PWW2025}), and closed-form solutions are scarce beyond special and technically constrained settings (\cite{WX2021}). This paper contributes in three aspects.

First, 
we derive an explicit optimal solution for portfolio selection with a general S-shaped utility under an FSD constraint relative to a benchmark. Crucially, the FSD constraint obviates the need for a liquidation boundary: the optimal wealth is counter-comonotone with the pricing kernel and exhibits a two-regime structure, coinciding with the classical solution in favorable states and binding to the benchmark quantile in adverse states. This clarifies how FSD serves as a “soft” and interpretable left-tail boundary.

Second, 
we propose the Poor-Performance-Region Algorithm (PPRA), a numerical method that constructs a feasible, high-quality suboptimal solution, and in many cases a numerically optimal one. The key idea is to identify the ``poor performance region", namely quantile levels where the unconstrained classical optimizer violates SSD relative to the benchmark. The algorithm partitions this region and applies a state-dependent correction to enforce SSD, while reverting to the unconstrained policy elsewhere. Financially, the resulting rule is intuitive: track the benchmark in poor states to satisfy SSD; otherwise follow the classical optimizer.

Third, 
we develop an algorithm-guided piecewise-neural-network framework that embeds the PPRA-derived partition and analytic priors into the architecture. This design drastically narrows the functional search space, accelerates convergence, satisfies budget and SSD constraints more quickly, and attains higher objective values than a standard monolithic network, especially in non-concave settings where regular training struggles with infeasibility and local minima (\cite{ST1998}).

Methodologically, our approach combines duality and concavification insights (\cite{KLS1987,C2000,LL2020}) with the quantile formulation of utility maximization (\cite{HZ2011,X2016,GZ2025}) and SD constrained optimization (\cite{WX2021,WWX2024}). For FSD, we provide a closed-form solution without a liquidation boundary and interpret FSD as a distributional safety floor. For SSD, we translate feasibility into integral inequalities in the quantile domain and design a correction that ``repairs” exactly where the unconstrained policy underperforms the benchmark.

We validate our methods in a complete-market (Black-Scholes) setting across diverse utilities (CRRA, exponential, log, S-shaped, and piecewise) and benchmarks (log-normal, normal, exponential, uniform). For FSD, the explicit solution confirms that dominance constraints can substitute for liquidation boundaries. For SSD, PPRA consistently produces feasible, interpretable solutions that often match known optima in concave cases. The piecewise neural network guided by PPRA converges substantially faster and to better solutions than a monolithic network, particularly under non-concavity. 

Financial implications are immediate. 
First, 
FSD guarantees that all terminal-wealth quantiles exceed those of a benchmark, giving a realistic and implementable floor. 
Second, 
the optimal/near-optimal policy is to adhere to the benchmark in bad states and follow the classical policy otherwise, yielding transparent risk management. Third, 
SD constraints significantly reduce the probability of extreme losses induced by S-shaped preferences, beyond what standard VaR/ES constraints typically achieve.

\textbf{Scope and limitations}. Our analysis focuses on complete markets and static terminal-wealth problems, leveraging their equivalence to dynamic continuous-time settings via the pricing kernel. While PPRA is broadly applicable and robust in experiments, it provides suboptimality guarantees rather than universal optimality in the non-concave SSD case; establishing general sharp optimality conditions remains a promising direction for future research.

The structure of this paper is as follows. Section \ref{sec:model-settings} presents the model. Section \ref{sec:FSD} derives the explicit FSD solution for general S-shaped utilities and explains how FSD replaces liquidation boundaries. Section \ref{sec:SSD} explains the analytical hurdles of SSD under non-concavity and introduces PPRA. Section \ref{sec:Numerical results} provides numerical studies across utilities and benchmarks. Section \ref{sec:NN} develops the algorithm-guided, piecewise neural-network framework and contrasts it with standard networks. Section \ref{sec:conc} concludes.

\section{Model Settings}
\label{sec:model-settings}

Fix an atomless probability space $(\Omega, \F, \p)$. Let $L^0$ be the set of all random variables on $(\Omega, \F, \p)$. Let $L^1 \subset L^0$ be the set of all integrable random variables.
Denote the pricing kernel by a continuously-distributed random variable $\rho: \Omega \to (0, \infty)$ and $\rho \in L^1$. 
For an initial capital $\overline{x}\in \R$, the static version of the classic \cite{M1969}'s problem is given by
\begin{equation}\label{prob:merton}
	\begin{aligned}
		& \max_{X \in L^0} \E\[U(X)\]	\text{ s.t. } \E[\rho X] \leq \overline{x}, 
	\end{aligned}
\end{equation}
where $U: \R \to \R$ is a utility function to be specified in the following sections. The constraint is called the budget constraint. For a strictly concave utility $U$, the solution of Problem \eqref{prob:merton} is given by
\begin{equation}
	X_{\text{cla}} = I(\lambda_{\text{cla}} \rho),
\end{equation}
where the conjugate function $I:(0, \infty) \to \R$ is given by 
$
I(y) \triangleq \arg\sup_{x \in \R} \[ U(x) - xy \], ~y>0
$
(we will revisit the definition if $U$ is non strictly concave)
and $\lambda_{\text{cla}} > 0$ is a Lagrange multiplier solved from 
$ 
\E[\rho I(\lambda_{\text{cla}} \rho)] = \overline{x}.
$ 
The problem \eqref{prob:merton} can be seen as the terminal wealth optimization of the classic continuous-time Merton's problem in a complete market; see Appendix A of \cite{LL2024} for details. 
In the classic Merton's problem, the utility function is chosen as a smooth and strictly concave function, including power/log (CRRA, constant relative risk aversion) or exponential (CARA, constant absolute risk aversion) functions.


Now we introduce the concept of stochastic dominance (SD). For a random variable $X \in L^0$, the (upper) quantile function $Q_X: [0, 1] \to \R \cup \{\pm \infty\}$ is defined by
$$
Q_X(s) = \inf \{x \in \R \big| \p(X \leq x) > s \}, ~~ s \in [0, 1].
$$  
Denote by $\Q$ the set of all quantile functions:
$$
\Q \triangleq \left\{Q:[0,1] \to \R \cup \{\pm \infty\} \big| Q \text{ is increasing and right-continuous}\right\}.
$$
\begin{definition}[Stochastic dominance]\label{def:SD}
	\begin{enumerate}[(1)]
		\item Fix $X, Y \in L^0$. $X$ is greater than $Y$ in first-order stochastic dominance (FSD) if $Q_X(s) \geq Q_Y(s)$ for all $s \in (0, 1)$, which is denoted by $X \succeq_{(1)} Y$. 
		\item Fix $X, Y \in L^1$. $X$ is greater than $Y$ in second-order stochastic dominance (SSD) if $ \int_{0}^{t} Q_X(s) \d s \geq \int_{0}^{t} Q_Y(s) \d s$ for all $t \in (0, 1)$, which is denoted by $X \succeq_{(2)} Y$. 
	\end{enumerate}
\end{definition}


We specify a given benchmark wealth $X_0 \in L^0$. 
We proceed to study the problem with the first-order stochastic dominance (FSD) or second-order stochastic dominance (SSD) constraints:
\begin{equation}\label{prob:main1}
	\begin{aligned}
		\text{(FSD Problem) } & \max_{X \in L^0} 
		\E[U(X)] \text{ s.t. } \E[\rho X] \leq \overline{x} \text{ and } X \succeq_{(1)} X_0,
	\end{aligned}
\end{equation}
and
\begin{equation}\label{prob:main2}
	\begin{aligned}
		\text{(SSD Problem) } & \max_{X \in L^1} \E[U(X)] \text{ s.t. } \E[\rho X] \leq \overline{x} \text{ and } X \succeq_{(2)} X_0.
	\end{aligned}
\end{equation}


We denote the quantile function of $X_0$ by $Q_0 \in \Q$. Further, we define a minimal budget value:
$$
x_{Q_0} \triangleq \E[\rho X_0] = \int_{0}^{1} Q_0(s) Q_\rho(1-s) \d s.
$$
Throughout, we assume $\overline{x} \geq x_{Q_0}$. As a result, both problems \eqref{prob:main1}-\eqref{prob:main2} have at least one feasible solution $X_0$.

\cite{WX2021} propose and solve the FSD and SSD problems \eqref{prob:main1}-\eqref{prob:main2} with smooth and concave utilities; see \cite{WWX2024} for a mean-stochastic-dominance problem. In the following, we investigate the corresponding general non-concave utility optimization, particularly, S-shaped utility optimization.

\section{FSD Problem and Analytical Solution}\label{sec:FSD}

In this section, we apply the general S-shaped utility in Definition \ref{def:Sshaped} (see \cite{LL2020}) and proceed to study Problem \eqref{prob:main1}. 
Now we define a general S-shaped utility function $U$.
\begin{definition}[General S-shaped utility]
	\label{def:Sshaped}
	A general S-shaped utility function $U: \R \rightarrow \R$ with the reference point $B\in \R$ has the expression
	\begin{equation}\label{label:SUtility}
		U(x) = \left\{
		\begin{aligned}
			& U_1(x), && x \geq B,\\
			& U_2(x), && x < B,
		\end{aligned}
		\right.
	\end{equation}
	and satisfies the following properties:
	\begin{enumerate}[(i)]
		\item
		$U$ is increasing on $\R$, $U = U_1$ is differentiable and strictly concave on $(B, \infty)$, and $U = U_2$ is convex on $(-\infty, B)$.
		\item
		$U_1(B) = U_2(B)$ and $U_1'(B+) = U_2'(B-) = \infty$.
		\item
		Inada condition: $U_1'(\infty) = 0$.
	\end{enumerate}
\end{definition}


Note that in Definition \ref{def:Sshaped}, there is no requirement for a finite left endpoint of the domain of the utility function (known as the liquidation boundary in \cite{HK2018}). We solve the FSD problem \eqref{prob:main1} with a general S-shaped utility and hence illustrate that using the FSD constraint can replace the liquidation boundary for risk management. 
\begin{theorem}\label{thm:FSD}
	With a general S-shaped utility in Definition \ref{def:Sshaped}, the optimal solution of Problem \eqref{prob:main1} is given by
	\begin{equation}\label{eq:X*}\scriptsize
		X_{\text{FSD}}^* = \left\{
		\begin{aligned}
			& (U_1')^{-1}(\lambda \rho), \text{if } \left\{\rho < \frac{1}{\lambda} U_1' \left(Q_0(1 - F_\rho(\rho))\right) \text{ and } Q_0\left(1 - F_\rho(\rho)\right) \geq B  \right\} \text{ or } \left\{\rho \leq \frac{1}{\lambda} U_1' (C) \text{ and } Q_0\left(1 - F_\rho(\rho)\right) < B \right\};\\
			& Q_0(1 - F_\rho(\rho)), \text{if } \left\{ \rho \geq \frac{1}{\lambda} U_1' \left(Q_0(1 - F_\rho(\rho))\right) \text{ and } Q_0\left(1 - F_\rho(\rho)\right) \geq B \right\} \text{ or } \left\{ \rho > \frac{1}{\lambda} U_1' (C) \text{ and }  Q_0\left(1 - F_\rho(\rho)\right) < B \right\}.
		\end{aligned}
		\right. 
	\end{equation}
	where (i) the Lagrange multiplier $\lambda > 0$ is solved from the binding budget constraint $\E[\rho X_{\text{FSD}}^*] = \overline{x}$, and (ii) for any $Q_0\left(1 - F_\rho(\rho)\right) < B$, 
	the (state-dependent) tangent point $C \in (B, \infty)$ is solved from
	\begin{equation}\label{eq:tangent}
		\frac{U_1(C) - U_2(Q_0(1-F_\rho(\rho)))}{C - Q_0(1-F_\rho(\rho))} = U_1'(C).
	\end{equation}
\end{theorem}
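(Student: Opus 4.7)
The plan is to attack Problem \eqref{prob:main1} through the quantile formulation and a pointwise Lagrangian analysis, with concavification used to resolve the non-concave case. First, I would note that because $\rho$ is continuously distributed and we wish to minimize the inner product $\E[\rho X]$ given the marginal law of $X$, the Hardy--Littlewood rearrangement inequality forces any optimizer to be counter-comonotone with $\rho$, so one can write $X = Q(1 - F_\rho(\rho))$ for some $Q \in \Q$. The FSD constraint becomes $Q(s) \geq Q_0(s)$ for all $s \in (0,1)$ and the budget constraint becomes $\int_0^1 Q(s) Q_\rho(1-s)\,\d s \leq \overline{x}$, so the problem reduces to
\begin{equation*}
\max_{Q \in \Q,\, Q \geq Q_0}\; \int_0^1 U(Q(s))\,\d s \quad \text{s.t.}\quad \int_0^1 Q(s) Q_\rho(1-s)\,\d s \leq \overline{x}.
\end{equation*}

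Next, I would introduce a Lagrange multiplier $\lambda > 0$ for the budget constraint. Because the quantile constraint $Q \geq Q_0$ is pointwise in $s$, the Lagrangian decouples across $s$, leaving the one-dimensional subproblem $\max_{x \geq Q_0(s)}\{U(x) - yx\}$ with $y = \lambda Q_\rho(1-s)$. After substitution via counter-comonotonicity, $y = \lambda \rho$ and $Q_0(s) = Q_0(1 - F_\rho(\rho))$. I would analyze this subproblem in two cases. When $Q_0(1 - F_\rho(\rho)) \geq B$ the feasible set lies entirely in the strictly concave region of $U$, so the solution is the classical $(U_1')^{-1}(y)$ projected onto $[Q_0(s), \infty)$: the unconstrained optimizer is used whenever $(U_1')^{-1}(y) \geq Q_0(s)$, i.e.\ $\rho \leq \frac{1}{\lambda} U_1'(Q_0(s))$, and the FSD floor $Q_0(s)$ otherwise. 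When $Q_0(1 - F_\rho(\rho)) < B$ the constraint set crosses the non-concave region, and here convexity of $U_2$ on $(-\infty, B)$ shows that the restriction of $x \mapsto U(x) - yx$ to $[Q_0(s), B]$ attains its maximum at an endpoint, while on $[B, \infty)$ the maximum is $(U_1')^{-1}(y)$; the endpoint $B$ is dominated by this interior maximum.

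The main obstacle is the non-concave case, and the key tool is concavification of $U$ over $[Q_0(s), \infty)$. Its concave envelope coincides with $U_1$ on $[C, \infty)$ and equals the line segment tangent to $U_1$ at $C$ on $[Q_0(s), C]$, where $C$ is exactly the (state-dependent) tangent point of \eqref{eq:tangent}. Comparing the two remaining candidates $V_1(y) = U_1((U_1')^{-1}(y)) - y(U_1')^{-1}(y)$ and $V_2(y) = U_2(Q_0(s)) - y Q_0(s)$, the difference $V_1 - V_2$ is strictly decreasing in $y$, diverges to $+\infty$ as $y \to 0^+$ and to $-\infty$ as $y \to \infty$, and vanishes precisely at $y = U_1'(C)$; indeed, setting $V_1(y)=V_2(y)$ with $y = U_1'(C)$ and $(U_1')^{-1}(y) = C$ is exactly the tangency equation \eqref{eq:tangent}. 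Hence the classical optimizer $(U_1')^{-1}(\lambda\rho)$ is selected when $\rho \leq \frac{1}{\lambda} U_1'(C)$ and the FSD floor $Q_0(1 - F_\rho(\rho))$ otherwise, which matches \eqref{eq:X*} in both cases.

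Finally, I would pin down $\lambda$. The map $\lambda \mapsto \E[\rho X_{\text{FSD}}^*]$ is continuous and strictly decreasing in $\lambda$ with limits that straddle $\overline{x}$ (since $\overline{x} \geq x_{Q_0}$ rules out infeasibility and the Inada condition $U_1'(\infty)=0$ yields arbitrarily large wealth as $\lambda \to 0^+$), so a unique $\lambda > 0$ renders the budget constraint binding. Optimality of the candidate follows by a standard verification: feasibility is built in by construction, and for any competitor $X$ satisfying both constraints, pointwise optimality of the subproblem plus the budget inequality gives $\E[U(X)] \leq \E[U(X_{\text{FSD}}^*)]$. The concavification step ensures this bound is tight even in the non-concave region, because the tangent point $C$ identifies the unique $y$ at which the original utility and its concave envelope yield the same maximum value over $[Q_0(s),\infty)$.
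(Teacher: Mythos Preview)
Your proposal is correct and follows essentially the same approach as the paper: reduce to quantile formulation via counter-comonotonicity, Lagrangianize the budget constraint, solve the resulting pointwise problem $\max_{x\geq Q_0(s)}\{U(x)-\lambda Q_\rho(1-s)x\}$, and split into the two cases $Q_0(s)\geq B$ (constrained concave maximization) and $Q_0(s)<B$ (concavification via the tangent point $C$). Your treatment is in fact more detailed than the paper's, which simply states the pointwise maximizer in each case without the $V_1-V_2$ crossing analysis or the verification/existence argument for $\lambda$.
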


\begin{proof}[Proof of Theorem \ref{thm:FSD}]
	
	As the objective in Problem \eqref{prob:main1} is an increasing function of $X$, the optimal wealth $X_{\text{FSD}}^*$ of Problem \eqref{prob:main1} is counter-monotonic to the pricing kernel $\rho$ (see, e.g., \cite{HZ2011,W2018}). Denote by $\xi$ the quantile transformation of $\rho$ such that $\xi$ has the uniform distribution on $[0, 1]$ and $Q_\rho(\xi) = \rho$. Hence, the optimal wealth $X_{\text{FSD}}^*$ under the first-order stochastic dominance constraint should satisfy
	$$
	X_{\text{FSD}}^* \geq Q_0(1-\xi).
	$$
	Hence, Problem \eqref{prob:main1} is translated to the following problem:
	\begin{equation}\label{prob:QuantForm}
		\max_{X \geq Q_0(1-\xi), \E\left[\rho X\right] = x} \E\left[ U(X) \right].
	\end{equation}
	Further, Problem \eqref{prob:QuantForm} is converted to a state-dependent pointwise optimization problem:
	\begin{equation}\label{prob:pointwise}
		\max_{X \geq Q_0(1-\xi)} \left\{ U(X) - \lambda \rho X  \right\},
	\end{equation}
	where $\lambda>0$ is a Lagrange multiplier to be determined such that $\E[\rho X] = \overline{x}$. To solve Problem \eqref{prob:pointwise}, we have the following two cases:
	\begin{enumerate}[(i)]
		\item for any $\omega \in \Omega$ satisfying $Q_0(1-\xi(\omega)) < B$, we solve the tangent point $C(\omega)$ from \eqref{eq:tangent} and have
		$$
		X_{\text{FSD}}^* = \left\{
		\begin{aligned}
			& (U_1')^{-1}(\lambda \rho), &&\text{if } (U_1')^{-1}(\lambda \rho) > C \text{ and } Q_0(1-\xi) < B;\\
			& Q_0(1-\xi), &&\text{if }  (U_1')^{-1}(\lambda \rho) \leq C \text{ and } Q_0(1-\xi) < B.
		\end{aligned}
		\right. 
		$$
		
		\item for any $\omega \in \Omega$ satisfying $Q_0(1-\xi(\omega)) \geq B$, we have
		$$
		X_{\text{FSD}}^* = \left\{
		\begin{aligned}
			& (U_1')^{-1}(\lambda \rho), &&\text{if } (U_1')^{-1}(\lambda \rho) > Q_0(1-\xi) \text{ and } Q_0(1-\xi) \geq B;\\
			& Q_0(1-\xi), &&\text{if }  (U_1')^{-1}(\lambda \rho) \leq  Q_0(1-\xi) \text{ and } Q_0(1-\xi) \geq B.
		\end{aligned}
		\right. 
		$$
	\end{enumerate} 
	Further, as $\xi = F_\rho(\rho)$, we derive the optimal solution $X_{\text{FSD}}^*$ given by \eqref{eq:X*}.
\end{proof}

In the literature, a liquidation boundary is needed for the optimization of the S-shaped utility, otherwise Problem \eqref{prob:main1} has no solution (mathematically, it is because one cannot establish a concave envelope for the S-shaped utility without a lower bound in the domain). In the FSD problem, we do not require the liquidation boundary for the S-shaped utility. The solution is twofold. In some good scenarios, it behaves like the classic solution. In some bad scenarios, it behaves like the benchmark quantile. From the solution, we can see that the FSD constraint acts as a good substitute for the liquidation boundary. 

\section{SSD Problem}\label{sec:SSD}
\subsection{SSD Problem under Non-concavity: Analytical Difficulty}

Let us restate the results of \cite{WX2021} on strictly concave utilities. 

\begin{lemma}[Theorem 5.10 of \cite{WX2021}]\label{thm:SSD}
	Let $\overline{x} > x_0$. 
    For a strictly concave utility $U$ with appropriate regularity conditions, 
	  the optimal solution to Problem \eqref{prob:main2} is $X_{\text{SSD}}^* = Q_{\text{SSD}}^*(1-F_\rho(\rho))$ with
	\begin{equation}\label{eq:Q^*}
		Q_{\text{SSD}}^*(s) = I\(\lambda \(Q_\rho(1-s) - y_{\text{SSD}}^*(1-s) \) \), ~ s \in (0,1),
	\end{equation}
	where $y_{\text{SSD}}^*: (0, 1) \to [0, \infty)$ is a function 
	given by the system
	\begin{equation}\label{eq:system}
		\left\{
		\begin{aligned}
			&y_{\text{SSD}}^* \text{ is right-continuous and } 0 \leq \frac{\d y_{\text{SSD}}^*(t)}{\d Q_\rho(t)} \leq 1, \text{  } Q_\rho(t) - y_{\text{SSD}}^*(t) > 0 \text{ for all } t \in (0, 1);\\
			&z_{\text{SSD}}^*(s) \triangleq -\int_{s}^{1} \[ I(\lambda (Q_\rho(t) - y_{\text{SSD}}^*(t))) - Q_0(1-t) \] \d t, ~s \in (0,1);\\
			&\frac{\d y_{\text{SSD}}^*(t)}{\d Q_\rho(t)} \left\{
			\begin{aligned}
				& \in [0, 1], && z_{\text{SSD}}^*(t) = 0;\\
				& = 0, && z_{\text{SSD}}^*(t) < 0,
			\end{aligned}
			\right. \quad dQ_\rho \text{-a.e.,}
		\end{aligned}
		\right.
	\end{equation}
	and the Lagrange multiplier $\lambda > 0$ is determined by the binding budget constraint equation
	\begin{eqnarray}\label{eq:budget}
		\int_{0}^{1} Q_{\text{SSD}}^*(s) Q_\rho(1-s) \d s = \overline{x}.
	\end{eqnarray}
\end{lemma}
	
In the proof of \cite{WX2021}, the procedure of solving Problem \eqref{prob:main2} starts with a view of quantile formulation (\cite{HZ2011,XZ2016,X2016}). 
Specifically, we define
$$
\begin{aligned}
	& \W \triangleq \{w: [0, 1] \to [0, \infty) | w(0)=0, w \text{ is increasing and concave} \},\\
	& \W_1 \triangleq \{w \in \W | w(1) = 1 \}.
\end{aligned}
$$
Define
\begin{equation}
	\begin{aligned}
		\Q_2(Q_0) &\triangleq \left\{ Q \in \Q \big| \int_{[0,1]} Q(s) \d w(s) \geq \int_{[0,1]} Q_0(s) \d w(s) \text{ for all } w \in \W \right\}\\
		&= \left\{ Q \in \Q \big| \int_{[0,1]} Q(s) \d w(s) \geq \int_{[0,1]} Q_0(s) \d w(s) \text{ for all } w \in \W_1 \right\}.
	\end{aligned}
\end{equation}
In view of the quantile formulation approach, we are going to reformulate the optimization over the random variable $X$ in terms of its quantile function $Q_X$. We can hence express the objective in SSD Problem \eqref{prob:main2} as
$
\E[U(X)] = \int_{0}^{1} U(Q_X(s)) \d s.
$ 
Based on the counter-monotonic dependence between the optimal solution $X$ and the pricing kernel $\rho$ (see \cite{HZ2011} and the proof of Theorem \ref{thm:FSD}), we can express the budget constraint as 
$
\E[\rho X] = \int_{0}^{1} Q_X (s) Q_\rho(1-s) \d s.
$ 
According to \cite{FS2016}, SSD Problem \eqref{prob:main2} reads as
an optimal quantile problem
\begin{equation}\label{prob:SSD_qf}
	\begin{aligned}
		& \max_{Q \in \Q_2(Q_0)} \int_{0}^{1} U(Q(s)) \d s ~~ \text{s.t. } \int_{0}^{1} Q(s) Q_\rho(1-s) \d s \leq \overline{x}.
	\end{aligned}
\end{equation}
The next step of solving Problem \eqref{prob:main2} is a conversion from 
Problem \eqref{prob:SSD_qf} to
\begin{equation}\label{prob:SSD}
	\begin{aligned}
		& \max_{Q \in \Q_2(Q_0)} \int_{0}^{1} U(Q(s)) \d s ~~
		 \text{ s.t. }\left\{ 
		\begin{aligned}
			& \int_{0}^{1} Q(s) Q_\rho(1-s) \d s \leq \bar{x},\\
			& \inf_{w \in \W_1} \(\int_{[0,1]} (Q(s) - Q_0(s)) \d w(s)\) \geq 0.
		\end{aligned}
		\right.
	\end{aligned}
\end{equation}
If the optimal solution of the original problem satisfies the second constraint in Problem \eqref{prob:SSD}, then the optimal solution of Problem \eqref{prob:SSD} is the same as that of the original problem and it is considered as a trivial case. 
In the non-trivial case, for any $\lambda > 0$, $Q \in \Q$ and $w \in \W$, we let 
$$
K(Q, w; \lambda) = \int_{0}^{1} U(Q(s)) \d s - \lambda \int_{0}^{1} Q(s) Q_\rho(1-s) \d s + \int_{[0,1]} (Q(s) - Q_0(s)) \d w(s).
$$ 
One needs to consider the following max-min problem 
for $K$: 
\begin{equation}
	\max_{Q \in \Q} \min_{w \in \W} K(Q, w; \lambda).
\end{equation}
Unfortunately, the solution procedure of the SSD Problem under the non-concave utility is stuck at this step, because the desired Sion's max-min theorem requires that $K$ is concave in $Q$, which does not hold generally. In an alternative clue of concavifying $U$ (see \cite{LL2020}), one cannot guarantee that the concavification principle is valid (i.e., the optimal wealth variable under the concave envelope is almost surely not located in the region where the original utility and its concave envelope do not coincide); here, the concave envelope is defined as the smallest concave function dominating $U$. Even if a similar form of Lemma \ref{thm:SSD} is established for general non-concave utilities, one needs to solve the optimal pair $(y^*, z^*)$ from the system \eqref{eq:system}, which is an infinite-dimensional optimization problem over the functional space. \cite{WX2021} propose some explicit optimal solutions based on specific and technical assumptions on $U$, $Q_\rho$ and $Q_0$. Beyond these, there is no general analytical expression for optimal solutions. 

Nevertheless, Theorem 5.10 of \cite{WX2021} provides an important idea that the optimal solution may be characterized by the optimal pair $(y^*, z^*)$ from the system \eqref{eq:system}, and hence we can obtain some heuristics to construct suboptimal solutions based on numerical algorithms and even neural networks.  Specifically, we define a concept of the \textbf{poor performance region}:
$$
C \triangleq \left\{ t \in (0, 1) | I(\lambda Q_\rho(t)) - Q_0(1-t) < 0 \right\}.
$$
Here for the non-concave utility $U$, we define
$I(y) \triangleq \inf \{ \arg\sup_{x \in \R} \[ U(x) - xy \] \}, ~y>0$ whenever applicable. 
\begin{remark} 
``Whenever applicable" means that the function $I(y)$ is finite for any $y\in (0, \infty)$. That is, the utility $U$ has a finite concave envelope function. In this case, $I(\cdot)$ is right-continuous and decreasing on $(0, \infty)$. For example, if $U$ is an S-shaped utility with the domain $[L, \infty)$ where $L\in \R$, we have $I(\cdot)$ is well-defined, while if the domain is $\R$, $I(\cdot) \equiv -\infty$.
\end{remark}

In the set $C$, we compare the unconstrained classic solution $I(\lambda Q_\rho(t))$ with the SSD benchmark $Q_0(1-t)$. If $C = \emptyset$, then it means that the unconstrained solution automatically satisfies the SSD constraint, and $Q_{\text{SSD}}^*$ should be the same as the unconstrained solution. Next, we discuss the non-trivial case (i.e., $C \neq \emptyset$). For scenarios $t$ on this region, the unconstrained solution $I(\lambda Q_\rho(t))$ is smaller (i.e., performing worse) than the SSD benchmark $Q_0(1-t)$. 
From Definition \ref{def:SD}, the SSD constraint $Q_{\text{SSD}}^* \succeq_{(2)} Q_0$ reads as
$$
\int_{0}^{s} Q_{\text{SSD}}^*(t) \d t \geq \int_{0}^{s} Q_0(t) \d t ~~ \text{ for any } s \in (0,1),
$$
which is reflected in the system \eqref{eq:system}:
$$
z_{\text{SSD}}^*(s) = -\int_{s}^{1} \( Q_{\text{SSD}}^*(1-t) - Q_0(1-t) \) \d t \leq 0 ~~ \text{ for any } s \in (0,1),
$$
which translates to 
\begin{equation}\label{eq:SSD_zstar}
	z_{\text{SSD}}^*(s) = -\int_{s}^{1} \[ I(\lambda (Q_\rho(t) - y_{\text{SSD}}^*(t))) - Q_0(1-t) \] \d t \leq 0 ~~ \text{ for any } s \in (0,1).
\end{equation}
The theorem inspires that some correction function $y_{\text{SSD}}^*$ should be added to satisfy the constraint \eqref{eq:SSD_zstar}. In the next subsection, we provide a numerical algorithm and design the correction function to obtain a sub-optimal solution.



\subsection{SSD Problem: Numerical Algorithm}\label{sec:algorithm}
\begin{algorithm}[htbp]\normalsize
	\caption{Poor-Performance-Region Algorithm for SSD Problem \eqref{prob:main2} with general utilities}\label{alg}
	\begin{algorithmic}[1]
		\State We solve the classic problem without the SSD constraint and obtain the optimal quantile $Q_{\text{cla}} (\cdot) \triangleq I(\lambda_{\text{cla}} Q_\rho(1 - \cdot))$. The Lagrange multiplier is denoted by $\lambda_{\text{cla}} \in (0, \infty)$, which is solved from the following equation
		\begin{equation*}
		\begin{aligned}
			\overline{x} &= \int_{0}^{1} Q_\rho(s) I(\lambda_{\text{cla}} Q_\rho(s)) \d s.
		\end{aligned}
		\end{equation*}
		If 
		$$
		-\int_{t}^{1} \left[ I(\lambda_{\text{cla}} Q_\rho(s)) - Q_0(1-s) \right] \d s \leq 0
		$$
		holds for any $t \in [0,1]$, then the optimal solution is $Q_{\text{cla}} (\cdot)$. Otherwise, we start the procedure below.
		
		\State The Lagrange multiplier $\lambda$ is initially set as the above (to be determined at last). Compute the set 
		$$
		C = \left\{ t \in (0, 1) | I(\lambda Q_\rho(t)) - Q_0(1-t) < 0 \right\}.
		$$
		
		\State If $C = \emptyset$, then the optimal solution is $Q_{\text{cla}}$. Otherwise, specify an appropriate $n \in \N$ and write $C = \cup_{i = 1}^n (a_i, b_i)$ with $a_i < b_i$. Further we set $a_{n+1} = 1$.
		
		%
		%
		%
        \State For $i = n, (n-1), \cdots, 1$ (Steps 4-6), we compute
        \begin{equation}\label{eq:alg_y0}
			y_0(s) = \inf \{y\geq0| I\left(\lambda (Q_\rho(s) - y)\right) - Q_0(1-s) \geq 0\}, ~~ \text{$s \in (a_i, b_i)$}.
		\end{equation}
        Define
        \begin{equation}\label{eq:gi}
			g_i(s) = - \int_{s}^{a_{i+1}} \[ I\(\lambda \(Q_\rho(t) - y_0(s) \) \) - Q_0(1-t) \] \d t, ~~ s \in (a_i, b_i).
		\end{equation}

        \State We compute
        \begin{equation}\label{eq:ti}
        t_i = \sup\Big\{\, t \in [a_i, b_i] \;\big|\; g_i(t) +  z_{\text{sub}}(a_{i+1}) \id_{\{i\neq n\}} > 0 \Big\},
        \end{equation}
        where $z_{\text{sub}}(\cdot)$ will be determined in Step 6.
        If $\{ t \in [a_i, b_i] | g_i(t) +  z_{\text{sub}}(a_{i+1}) \id_{\{i\neq n\}}> 0 \} = \emptyset$, set $t_i = a_{i}$.
		
		\State Set 
        \begin{equation}\label{eq:y_sub}
        y_{\text{sub}}(\cdot) \equiv y_0(t_i) \text{ on } (t_i, a_{i+1})~ ~\text{ and } ~~ y_{\text{sub}}(\cdot) = y_0(\cdot) ~\text{ on } (a_i, t_i).                 
        \end{equation}
        Define $z_{\text{sub}}(\cdot)$ as follows:
        \begin{equation}\label{eq:z_sub}
        z_{\text{sub}}(s) \triangleq -\int_{s}^{1} \[ I(\lambda (Q_\rho(t) - y_{\text{sub}}(t))) - Q_0(1-t) \] \d t, ~s \in [a_i, a_{i+1}].
        \end{equation}

		\State 
		Set $t_0 = 0$ and $y_0(t_0) = 0$. After the iteration, we have Eq. \eqref{eq:y_sub} for $i = n, \cdots, 1$ 
		and $y_{\text{sub}}(\cdot) \equiv y_0(t_0)$ on $(t_0, a_1]$. We then verify whether $y_{\text{sub}}(\cdot)$ satisfies the monotonicity condition (non-decreasing over $(0,1)$). If the condition holds, proceed to Step 14; otherwise, apply the correction procedure and proceed to Step 8.

        \State For $i = (n-1), \dots, 2$ (Steps 8-12), check whether $y_0(\cdot)$ is increasing (non-decreasing) over $(a_{i-1}, a_{i+1})$. If yes, skip and proceed to next iteration; if not, proceed to Step 9.

        \State Compute $y_0(\cdot)$ over $(a_{i-1}, b_{i-1})$ and $(a_i, b_i)$ by Eq. \eqref{eq:alg_y0}.
    
    
        \State Redefine
        $$g_i(s) = - \int_{s}^{\bar{s}} \[ I\(\lambda \(Q_\rho(t) - y_0(s) \) \) - Q_0(1-t) \] \d t, ~~ s \in (a_{i-1}, b_{i-1}),$$
        where $\bar{s} = \inf \{u \in [a_i, b_i) | y_0(u) - y_0(s) \geq 0\}.$
    
        \State
        Compute
        \begin{equation}\label{eq:t_left}
            t_{\text{left}} = \sup \{ t \in [a_1, b_1] | g_1(t) > 0 \}, \;\;t_{\text{right}} = \bar{s}.
        \end{equation}


	\end{algorithmic}
\end{algorithm}

\begin{algorithm}[htbp]\normalsize
    \begin{algorithmic}[1]
        \setcounter{ALG@line}{11}
        \State Replace the initial $y_{\text{sub}}(\cdot)$ over $(t_{\text{left}}, t_{\text{right}})$ and set $$y_{\text{sub}}(\cdot) \equiv y_0(t_{\text{left}}) \text{ on } (t_{\text{left}}, t_{\text{right}}).$$
    
        \State After the iteration, check whether $y_{\text{sub}}(\cdot)$ satisfy the monotonicity condition. If yes, proceed to next step; otherwise, the algorithm may fail.
        
        \State Hence, we design the sub-optimal solution by
		\begin{equation}\label{eq:Q_algo}
		Q_{\text{sub}}(s) = \left\{
		\begin{aligned}
			&I\(\lambda \( Q_\rho(1-s) - y_0(t_n)\)\), && s \in (1-a_{n+1}, 1-t_n];\\
			&I\(\lambda \( Q_\rho(1-s) - y_0(1-s)\)\) \equiv Q_0(s), && s \in (1-t_n, 1-a_n];\\
			&\cdots\\
			&I\(\lambda \( Q_\rho(1-s) - y_0(t_0)\)\), && s \in (1-a_1, 1-t_0).\\
		\end{aligned}
		\right.
		\end{equation}

        \State Set
		$$
		z^*_i(t) = -\int_{t}^{1} \[ I\(\lambda \(Q_\rho(s) - y_0(t_i) \) \) - Q_0(1-s) \] \d s, ~~ t \in (\max\{a_i, t_i\}, a_{i+1}).
		$$ 
		If for any $i = 1, \cdots, n$, $z^*_i(\cdot) \leq 0$ always holds on the interval $(\max\{a_i, t_i\}, a_{i+1})$, 
        this 
        $Q_{\text{sub}}$ is feasible and sub-optimal. 

		\State Using a bisection method, we determine $\lambda_{\text{sub}} \in (0, \infty)$ from the following equation
		\begin{equation}\label{eq:budget_binding}
		\begin{aligned}
			\overline{x} &= \int_{0}^{1} Q_\rho(s) Q_{\text{sub}}(1-s) \d s.
		\end{aligned}
		\end{equation}
    \end{algorithmic}
\end{algorithm}

Inspired by the structure of the optimal solution \eqref{eq:Q^*}, we proceed to present a numerical algorithm to propose a suboptimal solution $Q_{\text{sub}}$, where we design a correction function $y_{\text{sub}}$ based on the value of $z_{\text{sub}}$ in different sections of the poor performance region $C$.

Detecting the structure of the poor performance region is the key task. We first define a function
\begin{equation}\label{eq:y0}
	y_0(s) \triangleq \inf \{y \geq 0| I\left(\lambda (Q_\rho(s) - y)\right) - Q_0(1-s) \geq 0\}, ~~ s \in (0, 1).
\end{equation}
Hence, we alternatively write the poor performance region as 
\begin{equation}\label{eq:relation_C_y0}
    C = \{t \in (0, 1) | y_0(t) > 0 \}. 
\end{equation}
For some very poorly performing scenarios $t \in C$, the function $y_0$ is adopted such that $Q_{\text{sub}}(1-t) = I(\lambda (Q_\rho(t) - y_0(t))) = Q_0(1-t)$.

The general idea in the construction of the sub-optimal solution to SSD Problem \eqref{prob:main2} is: when $z_{\text{sub}} \leq 0$, we use the classic solution to achieve optimality (now $y_{\text{sub}} = 0$ or constant, and $Q_{\text{SSD}}$ is the form of $I$); when $z_{\text{sub}} > 0$, we set $y_{\text{sub}} = y_0$ such that $Q_{\text{sub}} = Q_0$ to satisfy the SSD constraint. Throughout the algorithm design, we need to guarantee that $y_{\text{sub}}$ is increasing and non-negative. 
As the optimal structure in Theorem 5.10 of \cite{WX2021} comes from a strictly concave utility setting, the sub-optimal solution under our setting of a non-concave utility $U$ is actually the ``solution" of the concave envelope of $U$.




We therefore propose Algorithm \ref{alg} below, named the \textit{Poor-Performance-Region Algorithm} (PPRA). Based on Theorem \ref{thm:SSD}, we design a closed-form sub-optimal solution to Problem \eqref{prob:SSD_qf}:
\begin{equation}
	Q_{\text{sub}}(s) = \sum_{i = 0}^n I\left(\lambda_{\text{sub}} \left(Q_\rho(1-s) - y_0(t_i)\right)\right) \id_{\{s \in (1-a_{i+1}, 1-t_i)\}} + \sum_{i = 1}^n Q_0(s) \id_{\{s \in [1-t_i, 1-a_i]\}}, \;\; s \in (0, 1).
\end{equation}
This quantile function is also given in Eq. \eqref{eq:Q_algo}. The key idea is to use an increasing step function $y^*_{\text{SSD}}(\cdot)$ for approximation. The financial insight is that the decision maker should follow the SSD benchmark quantile $Q_0$ on some poor performance scenario and conduct the unconstrained optimal strategy $I\left(\lambda_{\text{sub}} \left(Q_\rho(1-s) - y_0(t_i)\right)\right)$ otherwise.

Here are some explanations of the algorithm design.
\begin{enumerate}[a.]

    \item In this algorithm, the partition points ${\{a_i\}}^n_{i=1}$ and ${\{t_i\}}^n_{i=1}$ of the poor performance region play an essential role in determining the structure of the optimal solution.
    
	\item In Step 2: In the whole procedure, we are solving out the structure of the optimal solution and the Lagrange multiplier. For the latter, note that the initial Lagrange multiplier may not satisfy the budget constraint. But it is a good initial point to start the algorithm. It will be determined in the final step.
	
	\item In Step 3: Because $I(\lambda Q_\rho(\cdot))$ and $Q_0(1-\cdot)$ are both nonincreasing, the set $C$ can be written as the union of disjoint intervals $\cup_{i=1}^n (a_i, b_i)$ or $\cup_{i=1}^\infty (a_i, b_i)$. In the latter case, to construct a numerically tractable solution, we use the union of the first $n$ disjoint intervals, where $n$ can be specified based on one's computational capability. 
	
	\item In Steps 4-5: For any $t \in [b_i, a_{i+1})$, we have
	$$
	I\(\lambda^* \(Q_\rho(s) - y_0(t) \) \) - Q_0(1-s) \geq I\(\lambda^* Q_\rho(s) \) - Q_0(1-s) \geq 0, ~~ s \in (t, a_{i+1}].
	$$
	We then compute
	$$
	g_i(t) = - \int_{t}^{a_{i+1}} \[ I\(\lambda^* \(Q_\rho(s) - y_0(t) \) \) - Q_0(1-s) \] \d s \leq 0. 
	$$
	Hence, any $t \in [b_i, a_{i+1})$ satisfies the SSD constraint. We desire to search the first point $t_i$ on $(a_i, b_i)$ which does not satisfy the constraint.

    \item In Steps 8-12, we need to check whether the constructed $y_{\text{sub}}(\cdot)$ is non-decreasing on (0,1) to make the solution valid. It holds in many cases. In some invalid cases, we could still ensure the monotonicity by redesigning the construction of $t_i$ defined in Eq. \eqref{eq:ti}, which is named $t_{\text{left}}$ in Eq. \eqref{eq:t_left}. The specific case will be shown in Section \ref{sec:Numerical results}.
    
	\item In Step 14, in many cases, one can design that $Q_{\text{sub}}(\cdot) = I(\lambda (Q_\rho(1-\cdot) - y_0(\cdot))) = Q_0(\cdot)$ on $(1-t_n, 1-a_n]$.
    
	\item In Step 15, we need to check whether the condition holds numerically. It holds in many cases.

\end{enumerate}

In the later sections, based on Algorithm \ref{alg}, we are able to provide the numerical illustration for the SSD Problem \eqref{prob:main2} with various concrete settings.  

\section{Numerical Results}
\label{sec:Numerical results}

Our study is motivated by the Black-Scholes model in a complete market. A classic Black-Scholes model consists of one riskless bond ($\frac{\d B_t}{B_t} = r \d t, ~ t \in [0, T]$, where the risk-free rate is $r = 0.05$) and one stock ($\frac{\d S_t}{S_t} = \mu_{\text{S}} \d t + \sigma_{\text{S}} \d W_t, ~ t \in [0, T]$, which represents a geometric Brownian motion with the expected return rate $\mu_{\text{S}} = 0.086$, the volatility parameter $\sigma_{\text{S}} = 0.3$ and a standard Brownian motion
$\{W_t\}_{0 \leq t \leq T}$). 
The wealth process is given by
$\d X_t = \(r X_t + (\mu_S-r) \pi_t \) \d t + \sigma_S\pi_t \d W_t, ~ t \in [0, T]  
$ 
and 
$X_0 = \overline{x}$, 
where $\{\pi_t\}_{0 \leq t \leq T}$ is the control process representing the investment amount in the stock and $T \in (0, \infty)$ is the evaluation time of investment. We define the pricing kernel process $\{\rho_t\}_{0 \leq t \leq T}$  by
\begin{equation}\label{eq:rho}
	\begin{aligned}
		& \frac{\d \rho_t}{\rho_t} = -r \d t - \theta \d W_t, ~~ t \in [0, T],
	\end{aligned}
\end{equation}
where we denote the market price of risk by $\theta \triangleq (\mu_S - r)/\sigma_S$. 

As the market is complete and one can use the martingale method to duplicate the optimal portfolio process, it is sufficient to solve the optimal terminal wealth variable via the static problem \eqref{prob:merton}  (see, e.g., Appendix A of \cite{LL2024}). Hence, our focus is solving the optimal wealth variable in Problem \eqref{prob:merton}. 
Adapting to Problem \eqref{prob:merton}, we denote the terminal variables of pricing kernel and wealth by $\rho := \rho_T$ and $X := X_T$, with a slight abuse of notation. Solving Eq. \eqref{eq:rho}, $\rho$ follows the log-normal distribution (i.e., $\log(\rho) \sim \text{N}(- (r + \theta^2/2)T, (\theta \sqrt{T})^2)$) and has a quantile function
\begin{equation}\label{eq:Qrho_BS}
Q_\rho(t) =  \exp \left\{\theta \sqrt{T} \Phi^{-1}(t) - (r + \theta^2/2)T) \right\} \triangleq \exp \left\{\sigma \Phi^{-1}(t) + \mu \right\}, ~ t \in [0, 1],
\end{equation}
where we denote by $\Phi^{-1}$ the standard normal quantile function and define $\sigma \triangleq \theta \sqrt{T}$ and $\mu \triangleq -(r + \theta^2/2)T)$. 
In Sections \ref{sec:Numerical results}-\ref{sec:NN}, we will mainly consider this $Q_\rho$ in Eq. \eqref{eq:Qrho_BS} and specify different benchmark quantile functions $Q_0$ and various utility functions.

We specify the parameters:
 the risk-free rate is $r = 0.05$;  
the expected return rate $\mu_{\text{S}} = 0.086$;  
the volatility parameter $\sigma_{\text{S}} = 0.3$;  
the evaluation time of investment is $T = 20$ (years).   
It follows that the market price of risk is $\theta = (\mu - r)/\sigma = 0.12$. We compute that $\sigma = 0.5367$ and $\mu = -1.1440$. 
We numerically illustrate our result by using the Black-Scholes model above. 

\subsection{FSD Problem: S-shaped Utility (Theorem \ref{thm:FSD})}

We begin by specifying an S-shaped utility $U:[L, \infty) \rightarrow \R$, following the general S-shaped utility formulation in Definition \ref{def:Sshaped}:

\begin{equation}\label{eq:Sshaped}
	\begin{aligned}
		&U(x) = \left\{
		\begin{aligned}
			& \frac{x^p}{p}, && x \geq 0,\\
			& -k (-x)^q, && L \leq x < 0,
		\end{aligned}
		\right.
	\end{aligned}
\end{equation}
where the parameters are set to $p = 0.6, q=0.5$, and $ k=2$.

 We then consider two portfolio selection problems. The first is Problem \eqref{prob:main1} with the general setting of S-shaped utility in Definition \ref{def:Sshaped} (in particular, Eq. \eqref{eq:Sshaped} with $L=-\infty$) and the benchmark quantile 
$
Q_0(t) = 10 t^2 - 1, \; t \in [0, 1].
$
The second is the Merton problem \eqref{prob:merton} using the S-shaped utility setting in Eq. \eqref{eq:Sshaped}:
\begin{equation}\label{prob:liq}
	\begin{aligned}
		& \max_{X \in L^0} \E[U(X)], \text{ s.t. } \E[\rho X] \leq \overline{x}, \; X \geq L \text{ a.s., }
	\end{aligned}
\end{equation}
where the liquidation boundary is given by $L = -5$ and $\bar{x}=5$. Here we add the liquidation boundary in order to make the second problem well defined and compare this classic solution with the first problem.

In this example, even though the first Problem \eqref{prob:main1} does not admit a liquidation constraint ($L=-\infty$), the first-order SD constraint acts a similar role: if $Q_0(1- F_\rho(\rho)) = Q_0(0)$, then the optimal solution $X^*$ is located at the boundary $Q_0(1- F_\rho(\rho))$, otherwise $X^*$ is the same as the classic solution $(U_1')^{-1}(\lambda \rho)$.
\begin{figure}[H]
	\centering
	\includegraphics[width=0.4\textwidth]{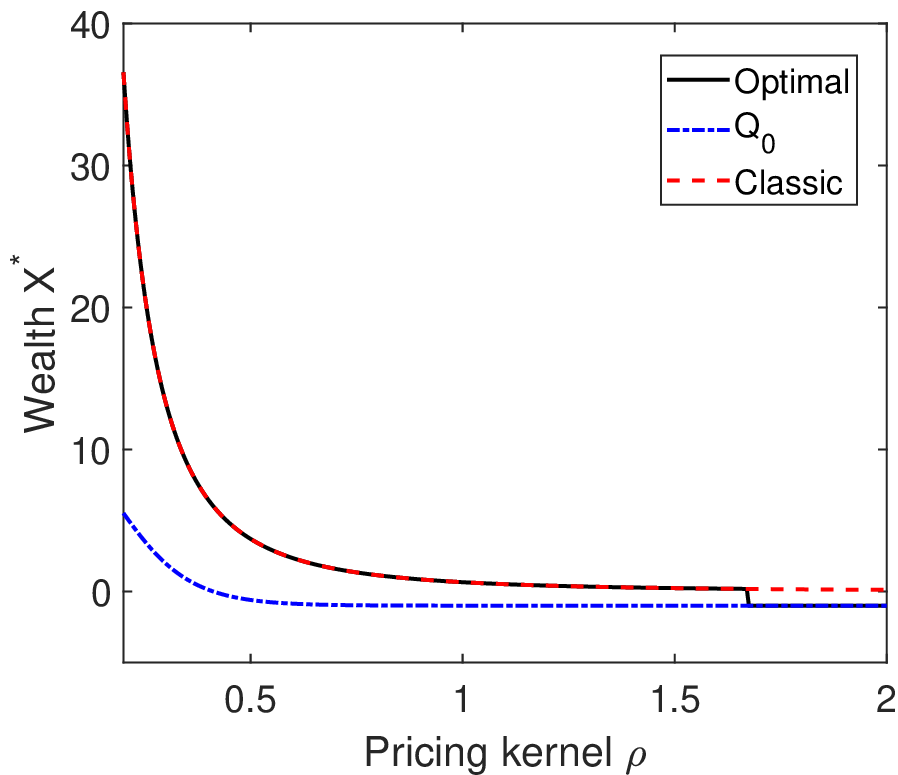}
	\caption{FSD problem: $\eqref{eq:Sshaped}$ vs $\eqref{prob:liq}$.}
	\label{fig:FSD_square}
\end{figure}

\subsection{SSD Problem: Power Utility}

We first assume the decision maker has a CRRA utility
\begin{equation}\label{eq:powerU}
	U(x) = \frac{1}{p} x^{p}, ~~ x > 0,
\end{equation}
where $p = 0.6$. We suppose that the benchmark quantile $Q_0$ also follows the log-normal distribution:
\begin{equation}\label{eq:ssd_Q0}
    Q_0(t) =  \exp \left\{\sigma_0 \Phi^{-1}(t) + \mu_0 \right\}, \; t \in [0, 1].
\end{equation}

The settings are given in Table \ref{table:power}.

\begin{table}[h]
	\centering
	\begin{tabular}{c c c c c c c c c c c c c c c c c }
		\hline
		$r$ & $\mu_{\text{S}} $ & $\sigma_\text{S}$ & $\theta$ & $T$ & $\mu$ & $\sigma$ & $\overline{x}$ & $p$   \\
		\hline
		0.05 & 0.086 & 0.3 & 0.12 & 20 & -1.1440 & 0.5367 & 10 & 0.6  \\
		\hline
	\end{tabular}
	\caption{Parameter settings in the numerical illustration.}
	\label{table:power}
\end{table}
For the concave utility specified in Eq. \eqref{eq:powerU}, the applicability of Proposition 6.8 in \cite{WX2021} to power utility with a log-normal pricing kernel provides a theoretical benchmark. Our algorithm’s result coincides with the characterization in \cite{WX2021}. Consequently, the algorithm attains the optimal solution in the following cases in Figure \ref{fig:SSD_power}.
For comparison, we also compute the classical optimal solution obtained without the SSD constraint and plot the figure. 
%

\begin{figure}[htbp]
	\begin{gather*}
		\tiny
		\begin{matrix}
			\includegraphics[width=0.48\textwidth]{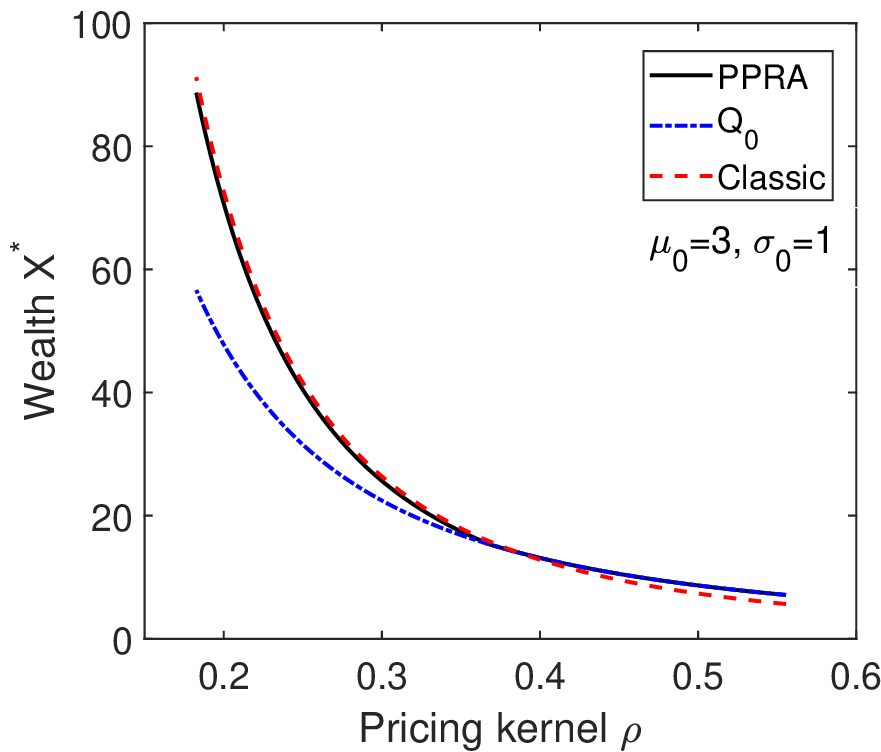}
			& \includegraphics[width=0.48\textwidth]{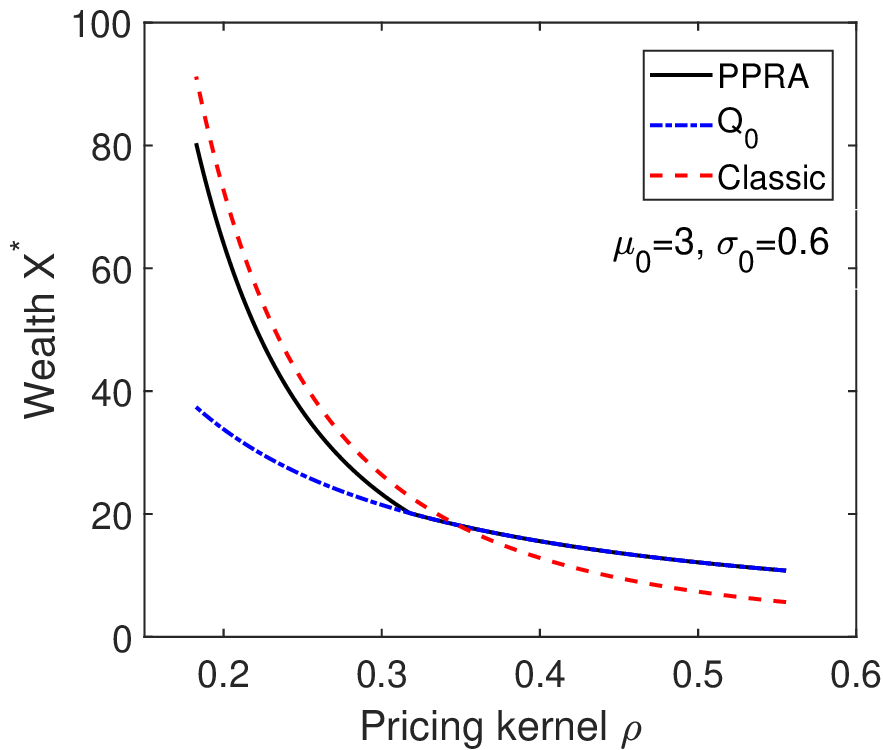}
		\end{matrix}
	\end{gather*}
	\begin{gather*}
		\tiny
		\begin{matrix}
			\includegraphics[width=0.48\textwidth]{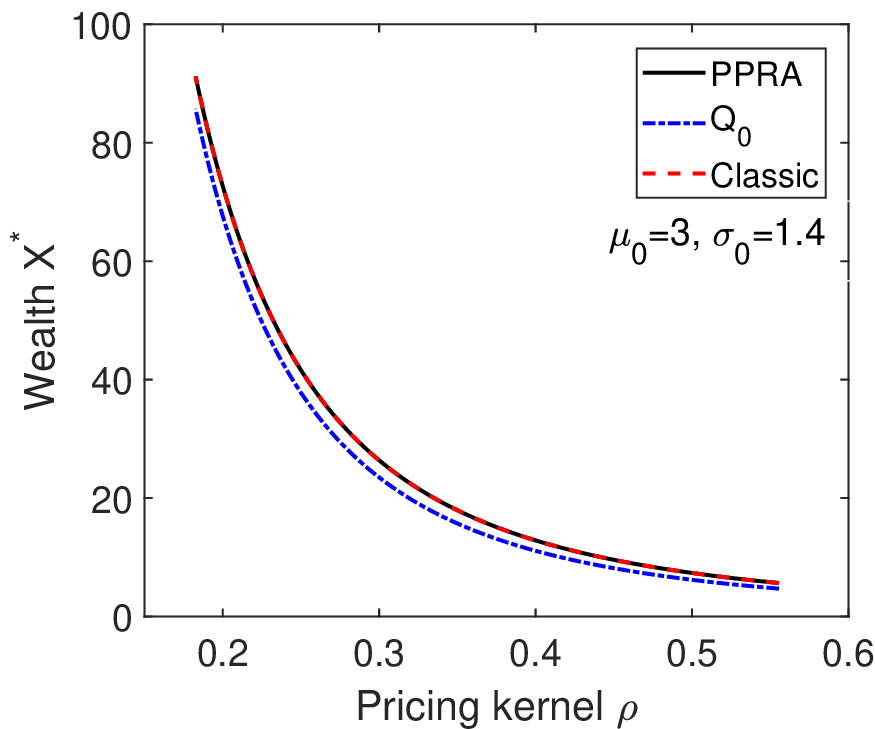}
			& \includegraphics[width=0.48\textwidth]{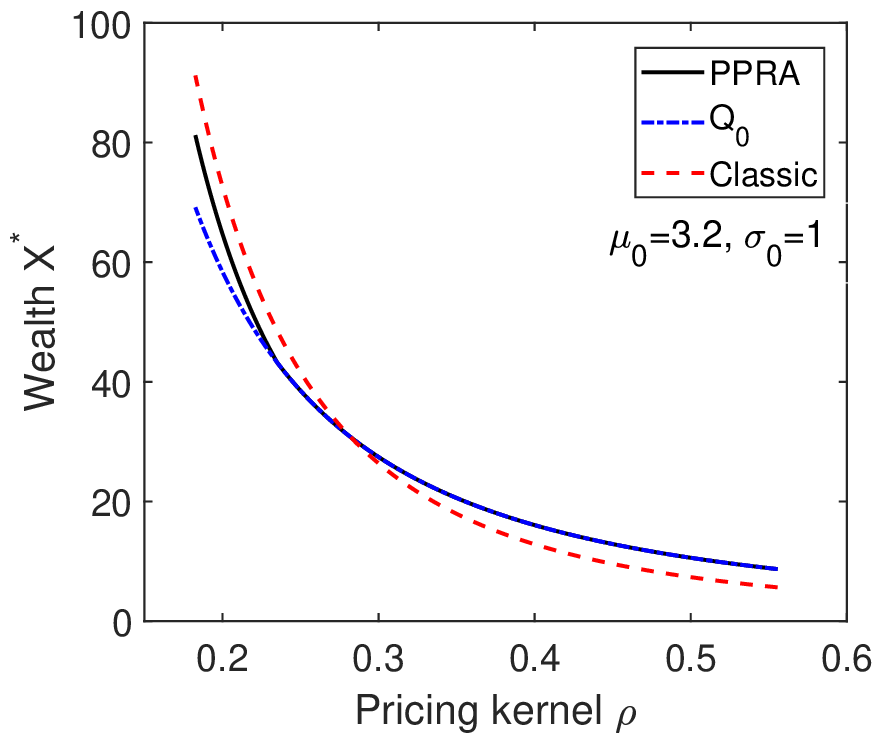}
		\end{matrix}
	\end{gather*}
	\begin{gather*}
		\tiny
		\begin{matrix}
			\includegraphics[width=0.48\textwidth]{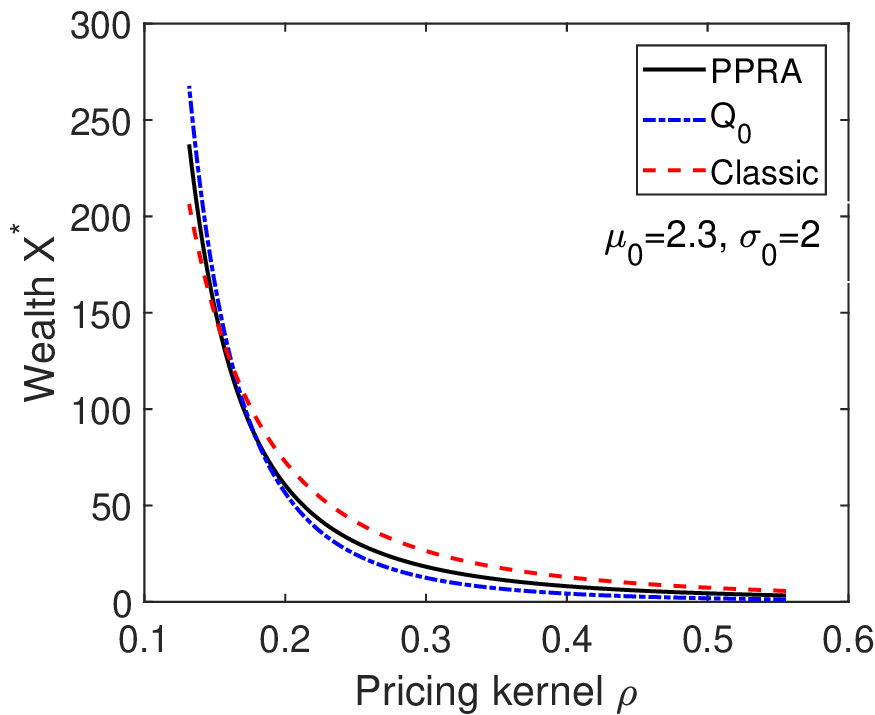}
			& \includegraphics[width=0.48\textwidth]{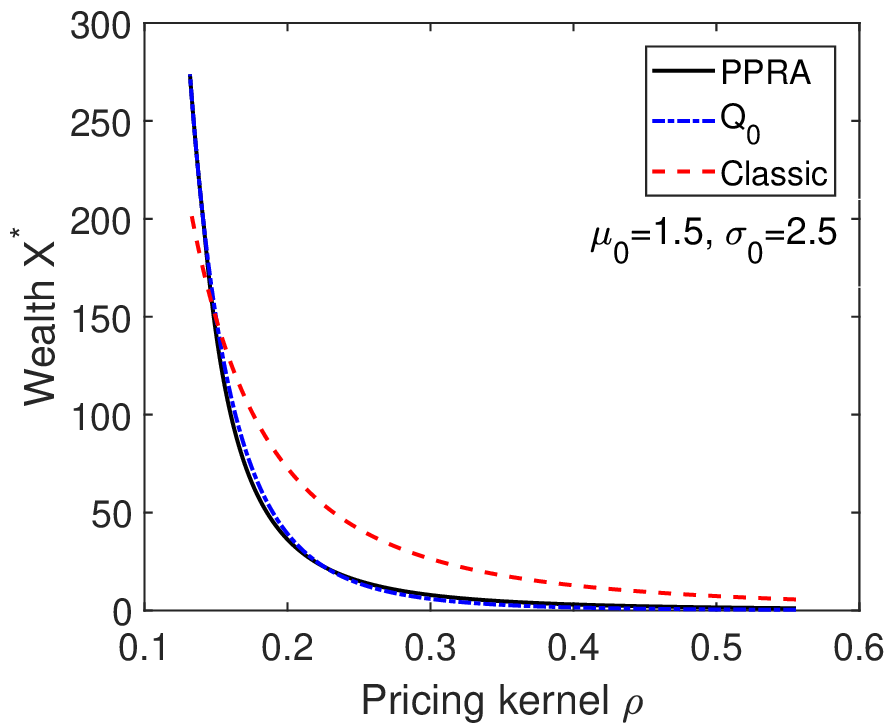}
		\end{matrix}
	\end{gather*}
	\caption{SSD Problem: Setting $\eqref{eq:powerU}$ and $\eqref{eq:ssd_Q0}$. }
	\label{fig:SSD_power}
\end{figure}

\begin{table}[h]
	\centering
	\begin{tabular}{c|c|c|c|c|c|c}
		\hline
		&parametrization & budget of $Q_0$ & poor performance region $C$ & $\lambda$ & $\lambda_\text{cla}$ & partition parameter \\
		\hline
		(a) & $(\mu_0, \sigma_0) = (3, 1)$ & 7.1231 & $(0.6092,1)$  & $0.9104$  & 0.9003 & $t_1 = 1$ \\
		\hline
		(b) & $(\mu_0, \sigma_0) = (3, 0.6)$ & 6.4109 & $(0.4978, 1)$  & 0.9471  & 0.9003 & $t_1 = 1$ \\
		\hline
		(c) & $(\mu_0, \sigma_0) = (3, 1.4)$ & 9.2876 & (0, 0.0179) & 0.9003 & 0.9003 & $t_1 = 0$\\
		\hline
		(d) & $(\mu_0, \sigma_0) = (3.2, 1)$ & 8.7002 & (0.2858, 1) & 0.9430 & 0.9003 & $t_1 = 1$\\
		\hline
		(e) & $(\mu_0, \sigma_0) = (2.3, 2)$ & 9.2691 & (0, 0.4309) & 1.1951 & 0.9003 & $t_1 = 0.0057$\\
		\hline
		(f) & $(\mu_0, \sigma_0) = (1.5, 2.5)$ & 9.8096 & (0, 0.6248) & 1.9965 & 0.9003 & $t_1 = 0.0654$\\
		\hline
	\end{tabular}
	\caption{Numerical results in Figure \ref{fig:SSD_power}.}
	\label{nume:power}
\end{table}

The basic logic is that: 
\begin{enumerate}[(i)]
	\item If the value of the pricing kernel is small, then the optimal wealth value is larger. Hence, the pricing kernel value is a signal of the market state: A small value shows a good market scenario. 
	
	\item If the poor performance region $C$ is smaller, then the optimal wealth is better (compared to the SSD constraint). This is because the SSD constraint is easier to achieve and the optimal wealth is closer to the classic unconstrained solution $X_\text{cla}$. 
\end{enumerate}
We apply the Poor-Performance-Region Algorithm (PPRA), and the explanations and financial insights from Figure \ref{fig:SSD_power} and Table \ref{nume:power} are as follows:
\begin{enumerate}

	\item Figure \ref{fig:SSD_power} (a) vs (b) vs (d): When the market performs poorly, the optimal wealth must coincide with the benchmark. This is because, in such adverse scenario, the benchmark provides a large value, and the SSD constraint serves to guarantee a minimum safety level and reduce risk.

    \item Figure \ref{fig:SSD_power} (c): The poor performance region is very small, which implies the classical solution inherently satisfies the SSD constraint. As a result, we observe that $\lambda = \lambda_{\text{cla}}$ and the optimal solution essentially coincides with the classical solution. This indicates that the benchmark plays only a limited role in shaping the optimal wealth.

    \item Figure \ref{fig:SSD_power} (e) vs (f): The budget of $Q_0$ is close to $\bar{x}$, to ensure the SSD constraint, the optimal wealth would behave similarly to the benchmark. However, due to a different $\lambda$ and a correction function $y_{\text{sub}}(\cdot)$ in the SSD problem, the solution between the SSD problem and the classical problem differs.

	
	
	
	
\end{enumerate}

\subsection{SSD Problem: S-shaped Utility}\label{subsection:Sshaped}
We use Algorithm \ref{alg} to study the sub-optimal solution of the SSD problem with S-shaped utility. First, we adopt the S-shaped utility setting in Eq. \eqref{eq:Sshaped}.
We then suppose that the benchmark quantile $Q_0$ also follows a log-normal distribution:
$$
Q_0(t) =  \exp \left\{\sigma_0 \Phi^{-1}(t) + \mu_0 \right\}, \; t \in [0, 1].
$$
We also compute the classic optimal solution without the SSD constraint. The settings are provided in Table \ref{table:SU}.
\begin{table}[h]
	\centering
	\begin{tabular}{c c c c c c c c c c c c c c c c c }
		\hline
		$r$ & $\mu_{\text{S}} $ & $\sigma_\text{S}$ & $\theta$ & $T$ & $\mu$ & $\sigma$ & $\overline{x}$ & $p$ & k  \\
		\hline
		0.05 & 0.086 & 0.3 & 0.12 & 20 & -1.1440 & 0.5367 & 10 & 0.6 & 2 \\
		\hline
	\end{tabular}
	\caption{Parameter settings in Figure \ref{fig:SSD_SU}.}
	\label{table:SU}
\end{table}

\begin{figure}[htbp]
	\begin{gather*}
		\tiny
		\begin{matrix}
			\includegraphics[width=0.42\textwidth, height=0.29\textheight]{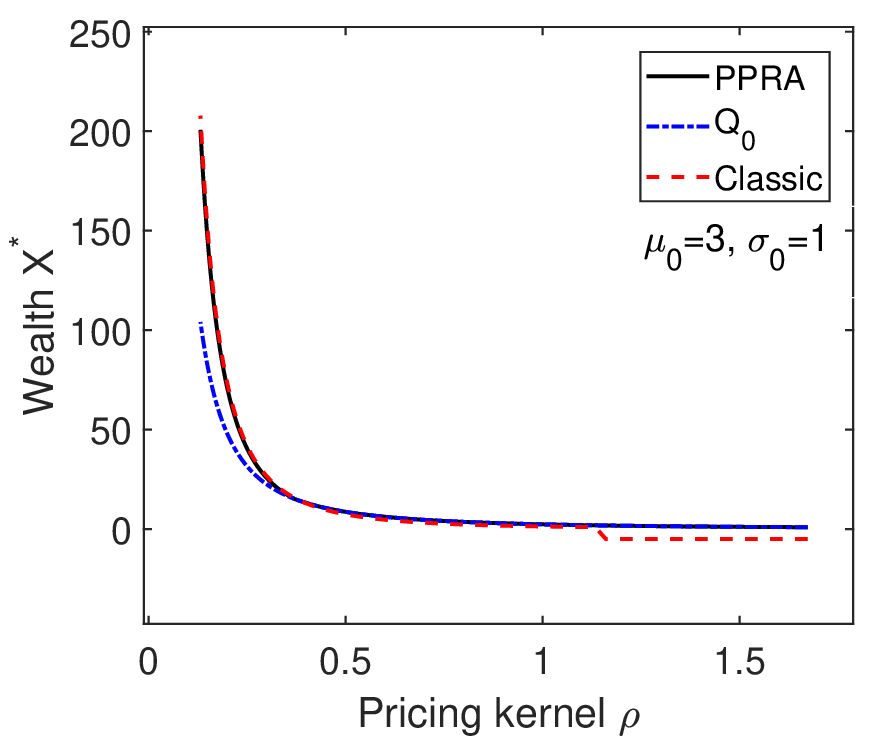}
			& \includegraphics[width=0.42\textwidth, height=0.29\textheight]{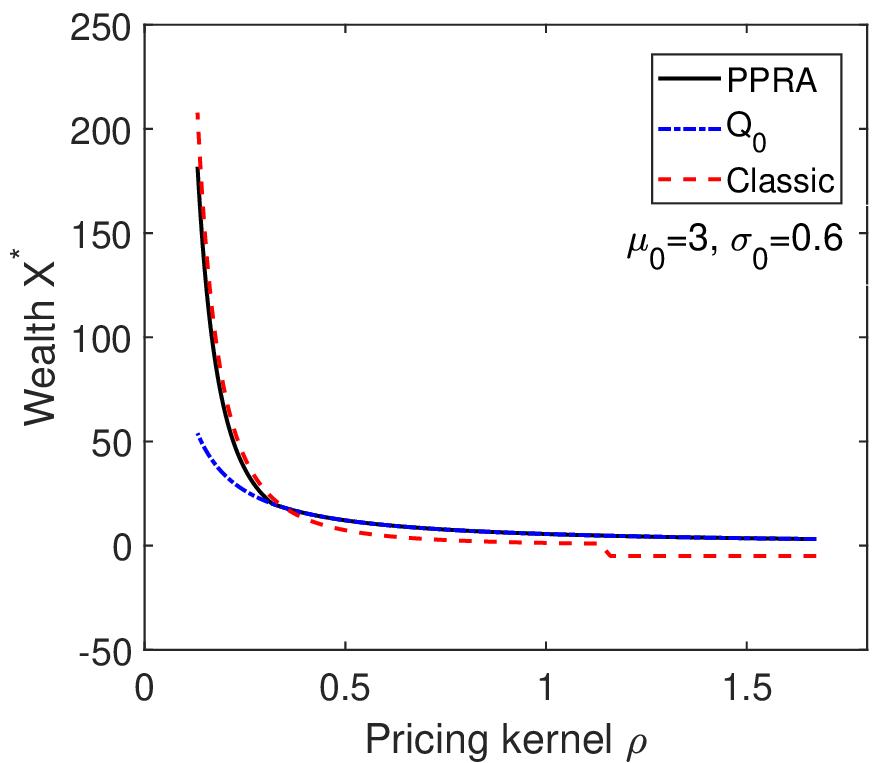}
		\end{matrix}
	\end{gather*}
	\begin{gather*}
		\tiny
		\begin{matrix}
			\includegraphics[width=0.42\textwidth, height=0.29\textheight]{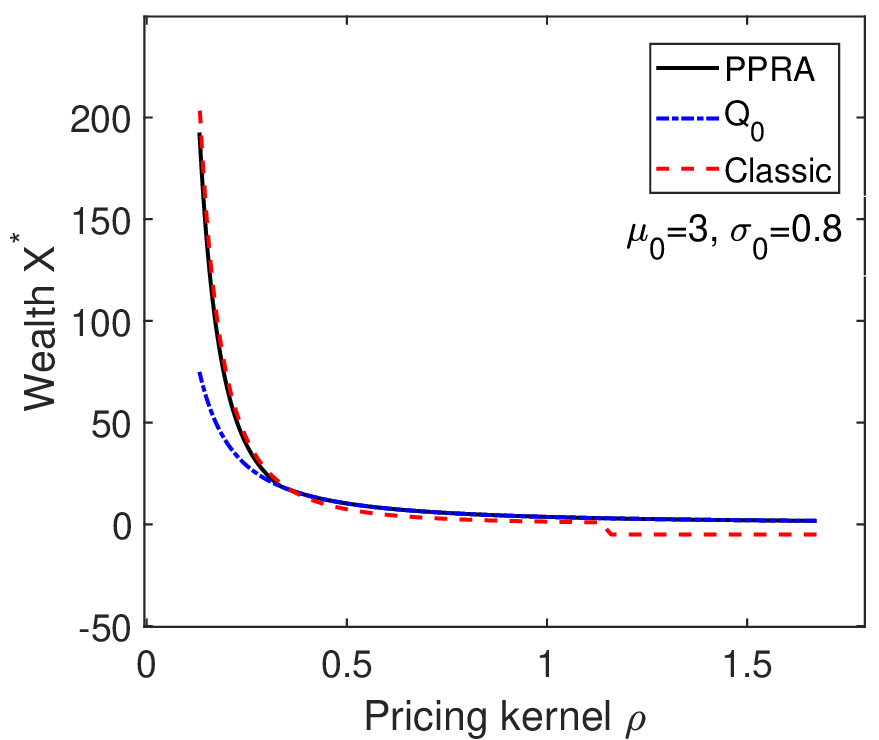}
			& \includegraphics[width=0.42\textwidth, height=0.29\textheight]{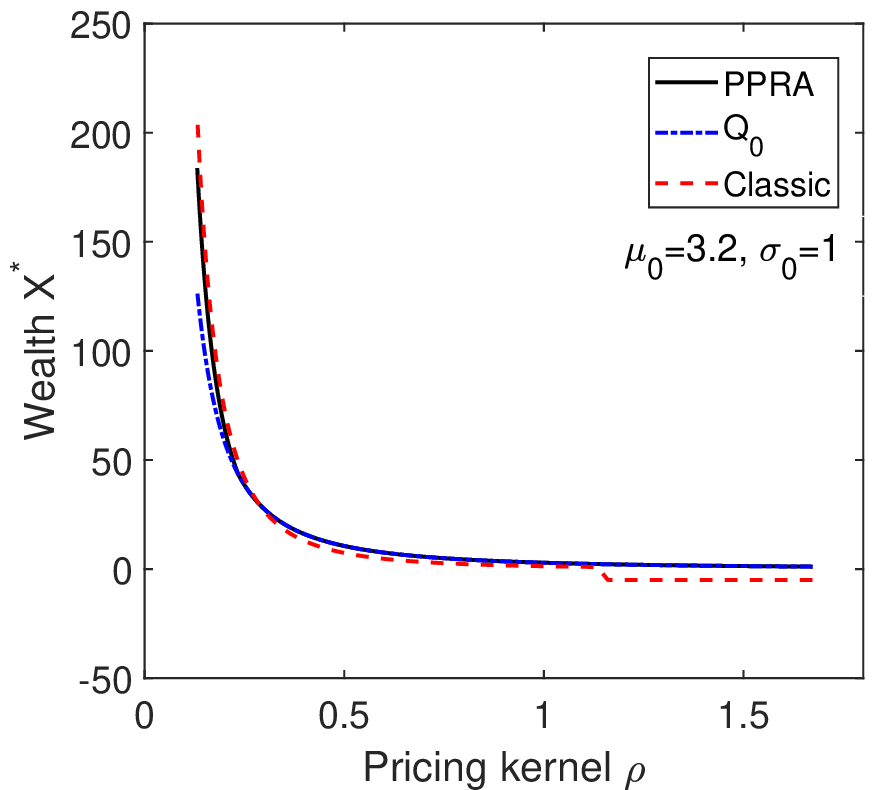}
		\end{matrix}
	\end{gather*}
	\begin{gather*}
		\tiny
		\begin{matrix}
			\includegraphics[width=0.42\textwidth, height=0.29\textheight]{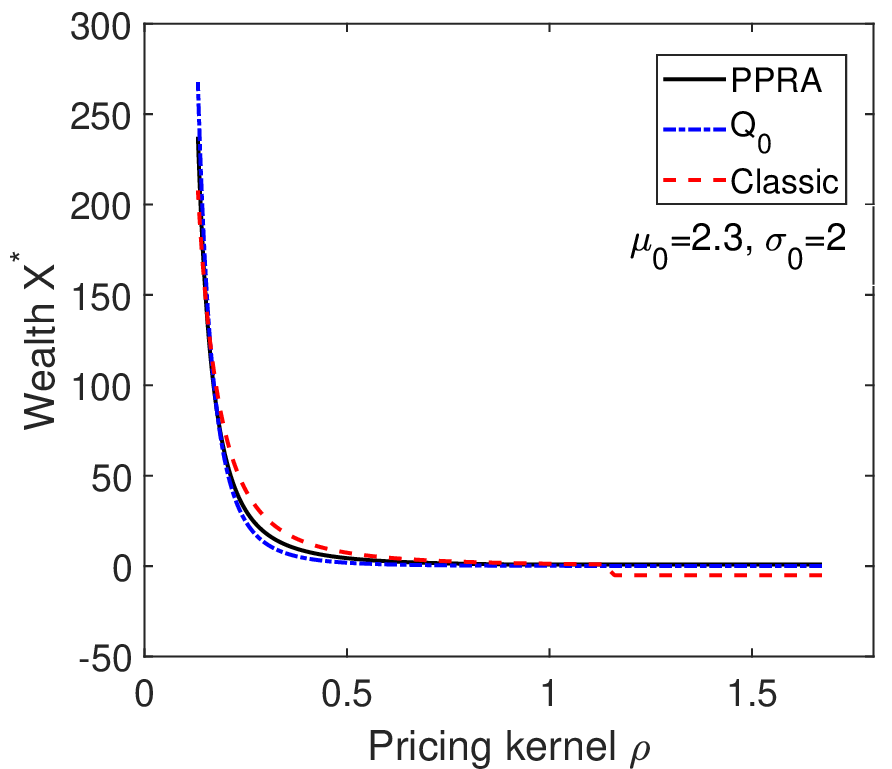}
			& \includegraphics[width=0.42\textwidth, height=0.29\textheight]{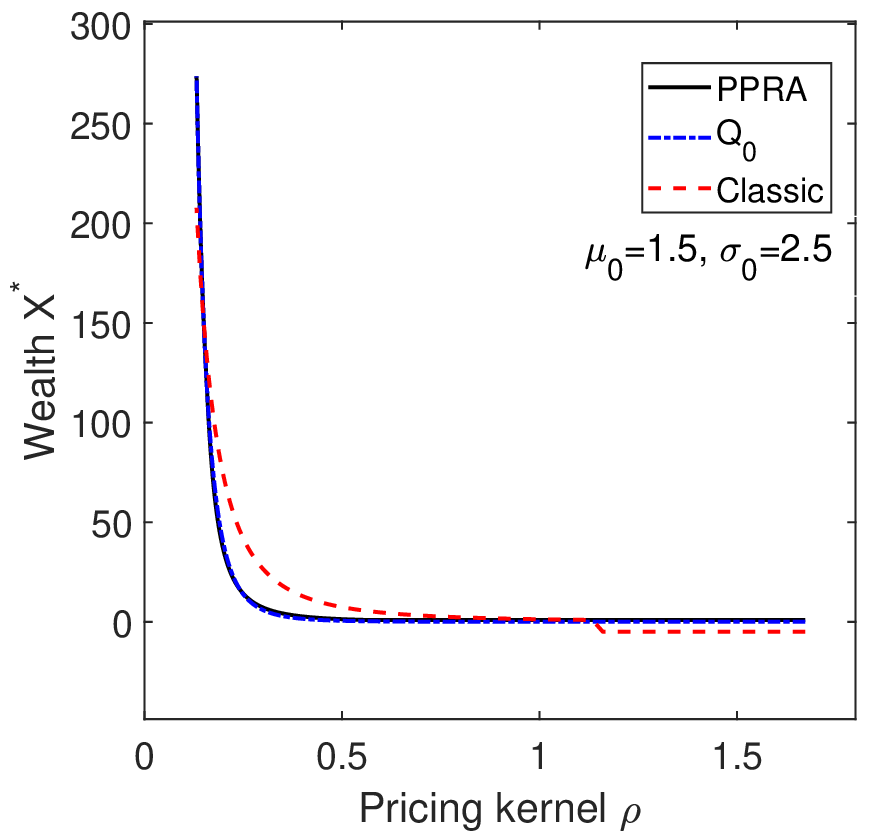}
		\end{matrix}
	\end{gather*}
	\caption{Impacts of $\mu_0$ and $\sigma_0$. }
	\label{fig:SSD_SU}
\end{figure}

\begin{table}[h]
	\centering
	\begin{tabular}{c|c|c|c|c|c|c}
		\hline
		&parametrization & budget of $Q_0$ & poor performance region $C$ & $\lambda$ & $\lambda_\text{cla}$ & partition parameter\\
		\hline
		(a) & $(\mu_0, \sigma_0) = (3, 1)$ & 7.1231 & $(0.6089,1)$  & $0.9105$  & 0.8979 & $t_1 = 1$ \\
		\hline
		(b) & $(\mu_0, \sigma_0) = (3, 0.6)$ & 6.4109 & $(0.4978, 1)$  & 0.9471  & 0.8979 & $t_1 = 1$ \\
		\hline
		(c) & $(\mu_0, \sigma_0) = (3, 0.8)$ & 6.6238 & $(0.5394,1)$ & 0.9255 & 0.8979 & $t_1 = 1$\\
		\hline
		(d) & $(\mu_0, \sigma_0) = (3.2, 1)$ & 8.7002 & (0.2858, 1) & 0.9430 & 0.8979 & $t_1 = 1$\\
		\hline
		(e) & $(\mu_0, \sigma_0) = (2.3, 2)$ & 9.2691 & $(0, 0.4355)\cup (0.9669,1)$ & 1.1987 & 0.8979 & $t_2 = 1, t_1 = 0.0061$\\
		\hline
		(f) & $(\mu_0, \sigma_0) = (1.5, 2.5)$ & 9.8087 & $(0, 0.6840) \cup (0.7726, 1)$ & 2.1508 & 0.8979 & $t_2 = 1, t_1 = 0.0957$\\
		\hline
	\end{tabular}
	\caption{Numerical results in Figure \ref{fig:SSD_SU}.}
	\label{nume:SU}
\end{table}

We apply the Poor-Performance-Region Algorithm, and the explanations and financial insights from Figure \ref{fig:SSD_SU} and Table \ref{nume:SU} are summarized as follows:
\begin{enumerate}
	
	
	
	

    \item Figure \ref{fig:SSD_SU} (a) vs (b) vs (c) vs (d): These cases correspond to scenarios in which the poor performance region consists of a single interval. Numerically, a smaller $\mu_0$ reduces the size of the poor performance region $C$ whereas a smaller $\sigma_0$ enlarges it. 

    \item Figure \ref{fig:SSD_SU}: Across the S-shaped utility cases, we observe that when the market performs poorly, relative to the classical solution, the PPRA wealth exhibits a clear improvement, driven by the SSD constraint. Therefore, in an adverse market scenario, the SSD constraint effectively performs as a risk-control mechanism, ensuring that the PPRA wealth remains at least as high as the benchmark wealth.
    
	\item Figure \ref{fig:SSD_SU} (e) vs (f): In these examples, the poor performance region splits into two disjoint intervals. We further observe that when the budget of $Q_0$ is closer to the bound $\bar{x}$, the PPRA wealth becomes closer to the benchmark. This phenomenon arises because a high level of the budget of $Q_0$ makes the SSD constraint dominate the optimization.

    \item Figure \ref{fig:SSD_SU} (a) vs (b) vs (d): A deeper examination of the impact of the budget of $Q_0$ shows a consistent pattern: as the budget of $Q_0$ approaches the bar $\bar{x}$, the PPRA wealth converges towards the benchmark. This illustrates how the budget level critically shapes the structure of the PPRA wealth.
	
\end{enumerate}

\subsection{SSD Problem: Various Utilities and Benchmark Quantiles}\label{subsection:ssd_numerical}
Based on the proposed algorithm, we further extend its applicability to a broader class of the SSD problems. 
To assess the generality of our approach, we conduct numerical experiments using different utilities and benchmark quantiles $Q_0$.

For the utility function $U(x)$, we consider several representative forms capturing different risk preferences, as shown in Table \ref{tab:utility_cases}.
\begin{table}[H]
\centering
\begin{tabular}{c|c|c|c}
\hline
& exponential & log & piecewise \\
\hline
$U(\cdot)$&$-\dfrac{1}{p} \exp(-p \cdot x), \;x>0$ & $\log(x), \;x>0$ & $\begin{cases}
(x - 1)^{p_1},                   & x \geq 2 \\
-\lambda_1 (1 - x)^{q_1},         & 1 \leq x < 2 \\
x^{p_2} + C,                      & 0 \leq x < 1 \\
C - \lambda_2 (-x)^{q_2},          & -1 \leq x < 0
\end{cases}$ \\
\hline
\end{tabular}
\caption{Settings of various utilities.}
\label{tab:utility_cases}
\end{table}


For the benchmark quantile $Q_0$, we consider the following four cases, as shown in Table \ref{tab:quantile_functions}. 
\begin{table}[H]
\centering
\begin{tabular}{c|c|c|c|c}
\hline
&exponential & log-normal & normal & uniform \\
\hline
$Q_0(\cdot)$&$-\dfrac{\log(1-t)}{\alpha} + k_0$ &
$\exp\left(\sigma_0 \cdot \Phi^{-1}(t) + \mu_0\right) + k_0$ &
$\sigma_0 \cdot \Phi^{-1}(t) + \mu_0$ &
$k t + k_0$ \\
\hline
\end{tabular}
\caption{Benchmark quantiles $Q_0(t)$, $t \in (0,1)$.}
\label{tab:quantile_functions}
\end{table}



The general settings coincide with the previous numerical examples and are given in Table \ref{numerical:general set}.
\begin{table}[H]
\centering
\begin{tabular}{c c c c c c c c c c c c c c c}
    \hline
    $r$ & $\mu_{\text{S}} $ & $\sigma_\text{S}$ & $\theta$ & $T$ & $\mu$ & $\sigma$  \\
    \hline
    0.05 & 0.086 & 0.3 & 0.12 & 20 & -1.1440 & 0.5367 \\
    \hline
\end{tabular}
\caption{General settings in Figures \ref{compose_1}-\ref{compose_2}.}
\label{numerical:general set}
\end{table}

We investigate several combinations of the utilities and the benchmark quantiles. To illustrate the effectiveness of our proposed algorithm, we present the most representative cases under the parameter settings in Table \ref{setup_numerical}.

\begin{table}[H]
\centering
\begin{tabular}{c|c|c|c|c|c}
\hline
& utility & distribution of $Q_0$ & $\overline{x}$ & parameters of $Q_0$ & parameters of utility\\
\hline
(a) & exponential & uniform  & 0.3 & $k=1, k_0=0$ & $p=0.6$\\
\hline
(b) & exponential & exponential  & 0.3 & $\alpha=1.5, k_0=0$ & $p=0.6$\\
\hline
(c) & log & normal  & 1.8 & $\mu_0=5, \sigma_0=1$ & --\\
\hline
(d) & log & uniform  & 1.4 & $k=10, k_0=0$ & --\\
\hline
(e) & piecewise & log-normal & 3.5 & $\mu_0 = -1, \sigma_0 = 3, k_0 = 2.3 $& \makecell{$p_1 = q_1 = 0.6, p_2 = 0.8,$ \\$ q_2 = 0.9, \lambda_1 = 1, \lambda_2 = 2$} \\
\hline
(f) & piecewise & exponential & 1.3 & $\alpha
 = 0.7, k_0 = 2.3 $& \makecell{$p_1 = q_1 = 0.6, p_2 = 0.8,$ \\$ q_2 = 0.9, \lambda_1 = 1, \lambda_2 = 2$} \\
\hline
\end{tabular}
\caption{Parameter settings in Figures \ref{compose_1}-\ref{compose_2}.}
\label{setup_numerical}
\end{table}

Applying the Algorithm \ref{alg}, we obtain the numerical results in Table \ref{result_numerical} and plot Figures \ref{compose_1}-\ref{compose_2}, where each panel illustrates the structure of the correction function $y_{\text{sub}}(\cdot)$ and the PPRA solution.

\begin{table}[H]
\centering
\begin{tabular}{c|c|c|c|c}
\hline
 & poor performance region $C$ & partition parameter & $\lambda$ & $\lambda_{\text{cla}}$\\
\hline
(a) & $(0.8803,1)$ & $ t_1 = 1$ & $\lambda = 1.5540$ & $\lambda_{\text{cla}} = 1.4429$\\
\hline
 (b) & $(0.8904,1)$ & $ t_1 = 1$ & $\lambda = 1.5498$ & $\lambda_{\text{cla}} =1.4429$\\
\hline
 (c) & $(0.4608,1)$ & $ t_1 = 1$ & $\lambda = 0.6497$ & $\lambda_{\text{cla}}= 0.5556$\\
\hline
 (d) & $(0.0426, 0.7236)$ & $ t_1 = 0.1766$ & $\lambda = 0.8260$ & $\lambda_{\text{cla}} = 0.7143$\\
\hline
 (e) & $(0,0.2419)\cup(0.8930,1)$ & $ t_2 = 1,t_1 = 0.0049$ & $\lambda = 1.7002$ & $\lambda_{\text{cla}} =0.9005$\\
\hline
 (f) & $(0.3902,1)$ & $ t_1 = 1$ & $\lambda = 1.7930$ & $\lambda_{\text{cla}} = 1.5516$\\
\hline
\end{tabular}
\caption{Numerical results in Figures \ref{compose_1}-\ref{compose_2}.}
\label{result_numerical}
\end{table}

\begin{figure}[htbp]
    \begin{gather*}
        \tiny
        \begin{matrix}
            \includegraphics[width=0.45\textwidth,height=0.403\textwidth]{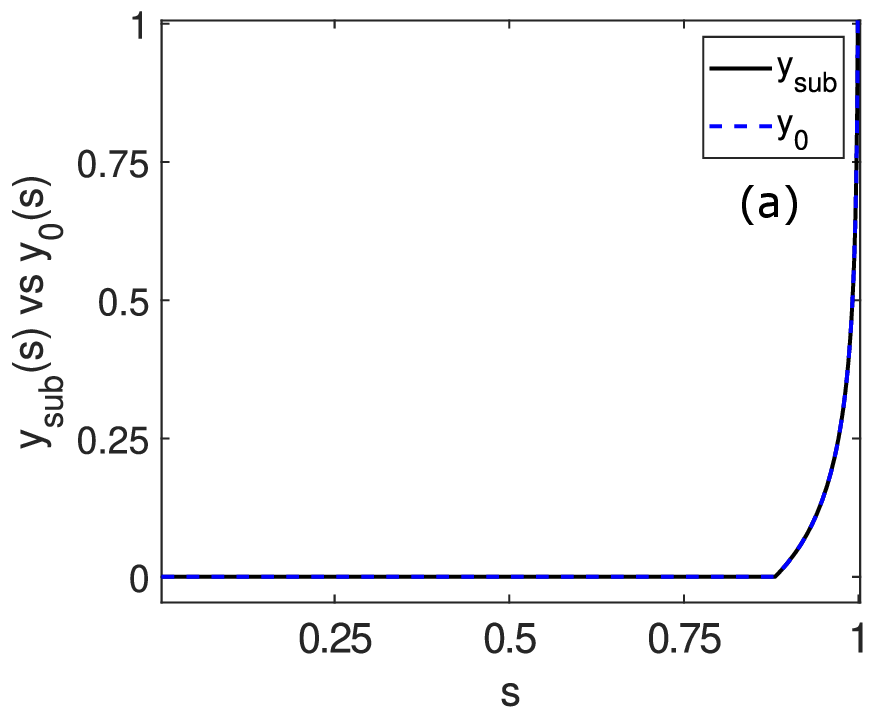}
            & \includegraphics[width=0.45\textwidth,height=0.41\textwidth]{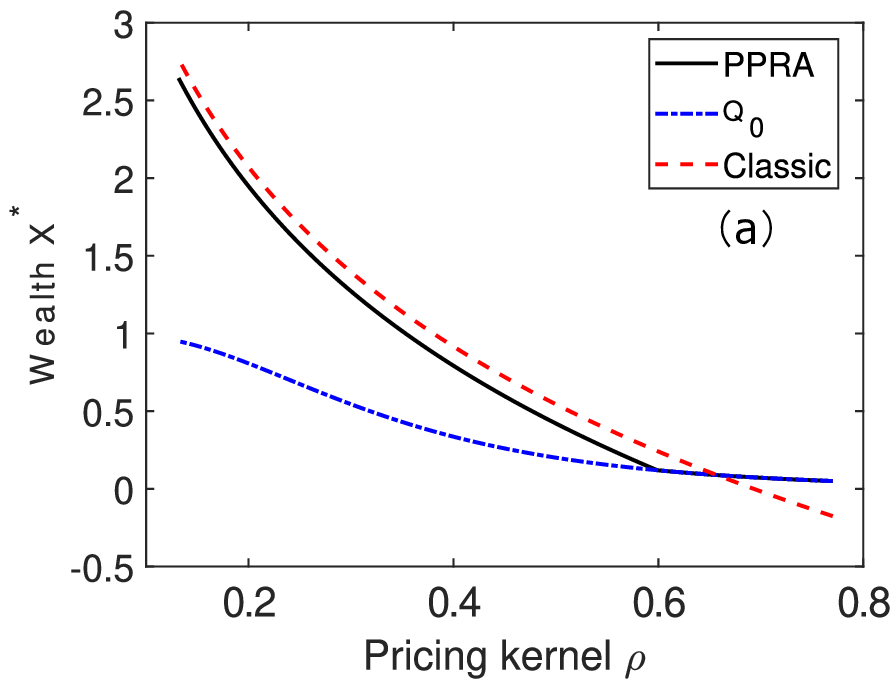}
        \end{matrix}
    \end{gather*}
    \begin{gather*}
        \tiny
        \begin{matrix}
            \includegraphics[width=0.45\textwidth,height=0.403\textwidth]{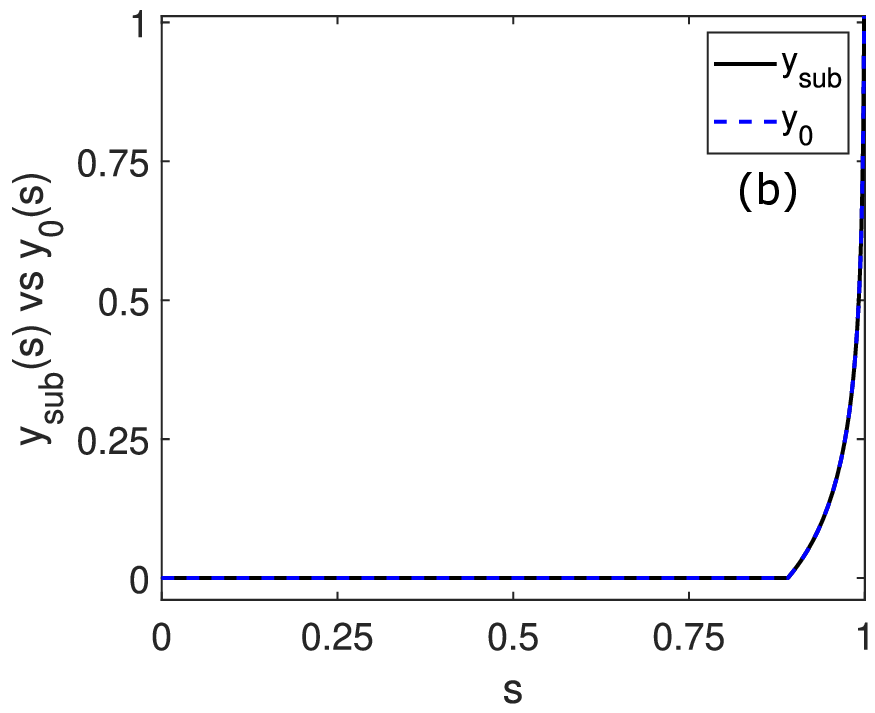}
            & \includegraphics[width=0.45\textwidth,height=0.41\textwidth]{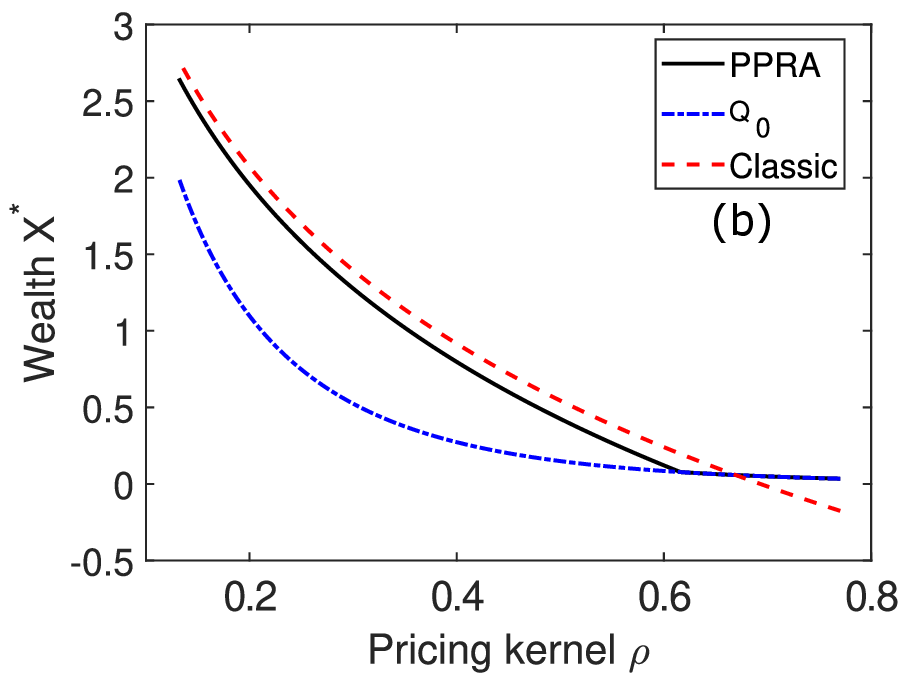}
        \end{matrix}
    \end{gather*}
    \begin{gather*}
        \tiny
        \begin{matrix}
            \includegraphics[width=0.45\textwidth,height=0.403\textwidth]{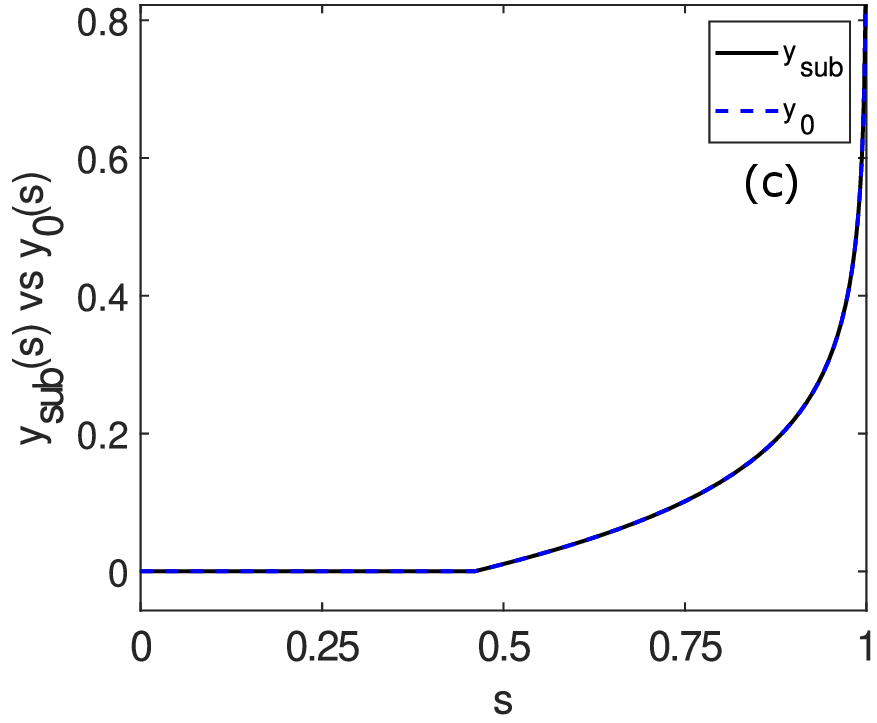}
            & \includegraphics[width=0.45\textwidth,height=0.41\textwidth]{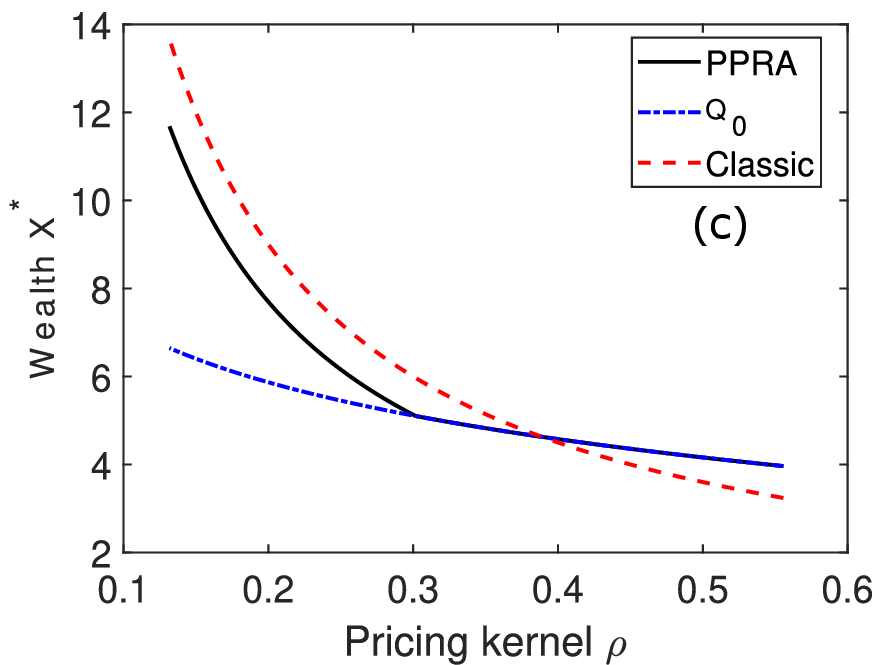}
        \end{matrix}
    \end{gather*}
\caption{Settings (a)(b)(c) in Table \ref{result_numerical}.}
\label{compose_1}
\end{figure}

\begin{figure}[htbp]
    \begin{gather*}
        \tiny
        \begin{matrix}
            \includegraphics[width=0.45\textwidth,height=0.403\textwidth]{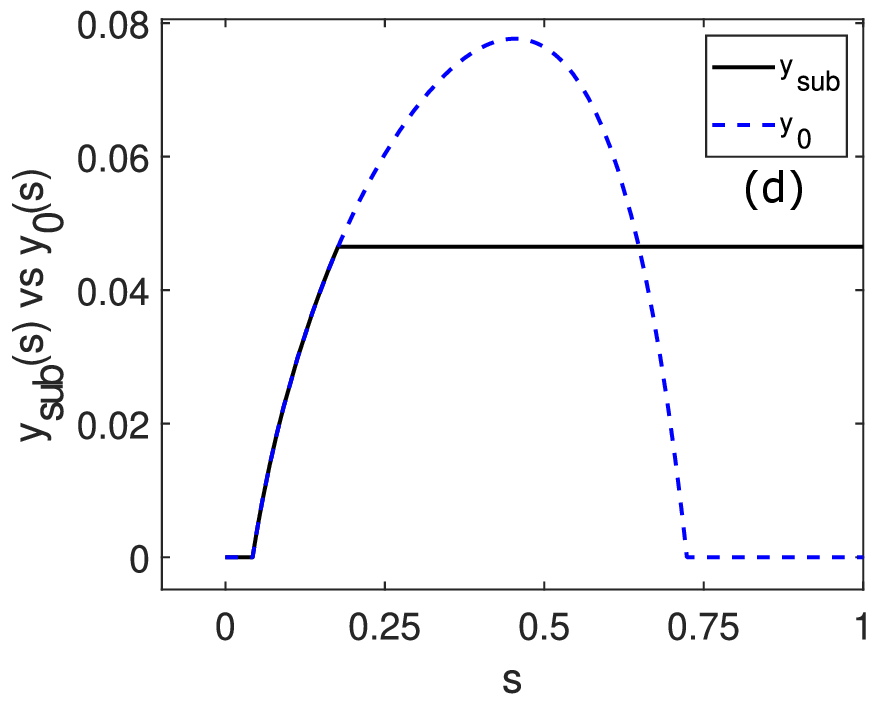}
            & \includegraphics[width=0.45\textwidth,height=0.41\textwidth]{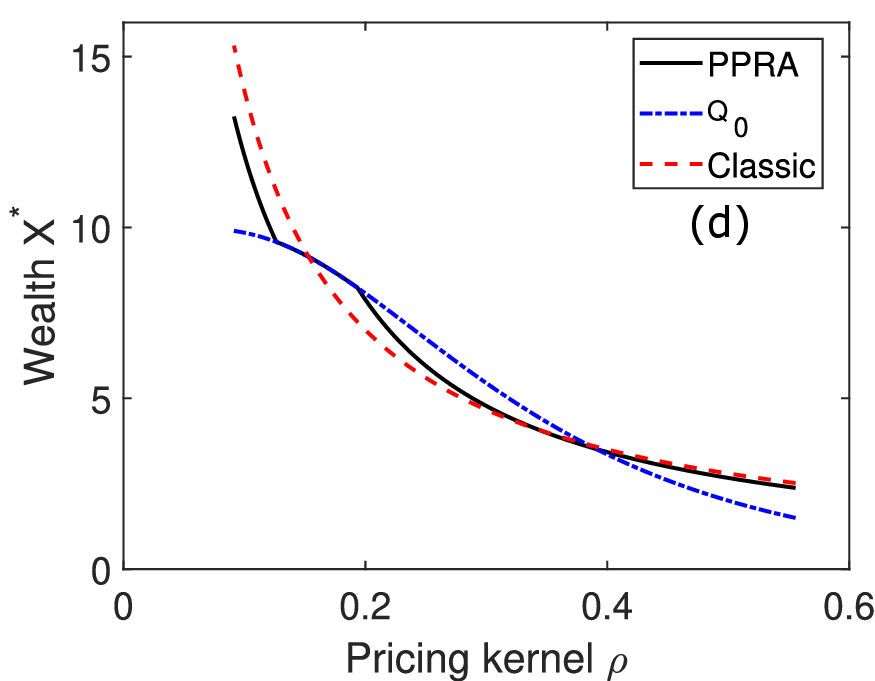}
        \end{matrix}
    \end{gather*}
    \begin{gather*}
        \tiny
        \begin{matrix}
            \includegraphics[width=0.45\textwidth,height=0.403\textwidth]{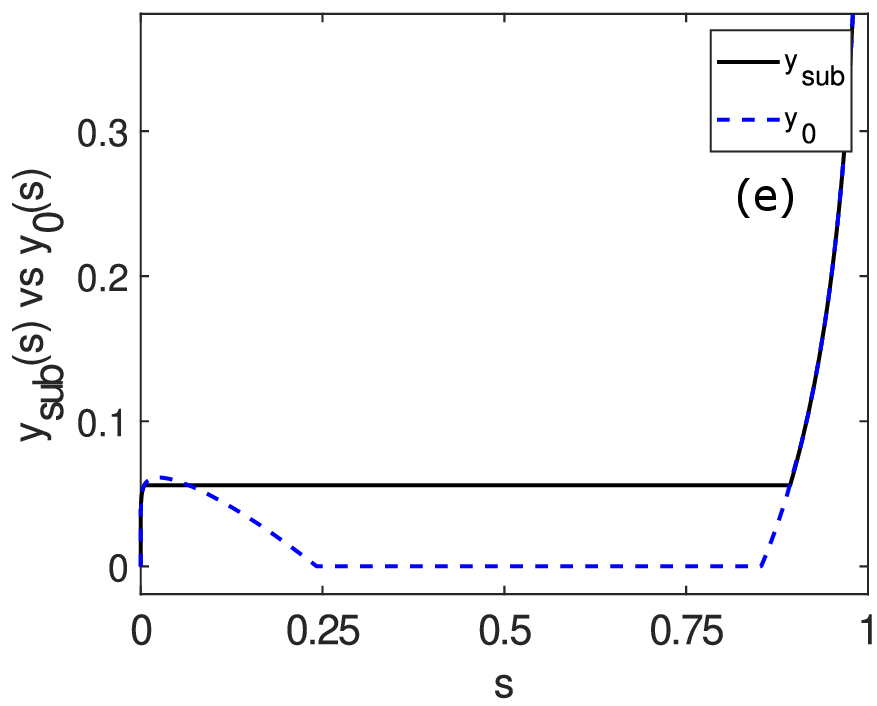}
            & \includegraphics[width=0.45\textwidth,height=0.41\textwidth]{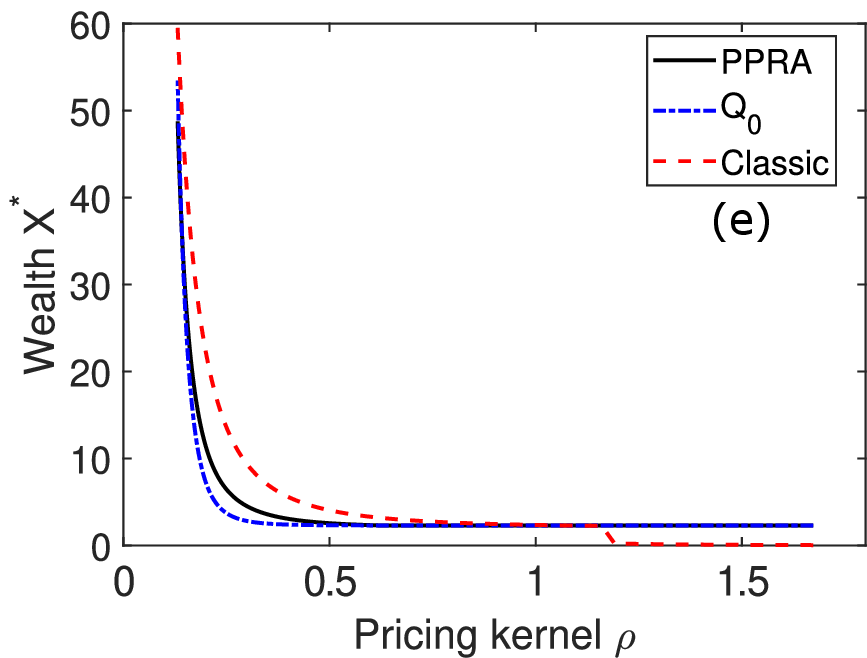}
        \end{matrix}
    \end{gather*}
    \begin{gather*}
        \tiny
        \begin{matrix}
            \includegraphics[width=0.45\textwidth,height=0.403\textwidth]{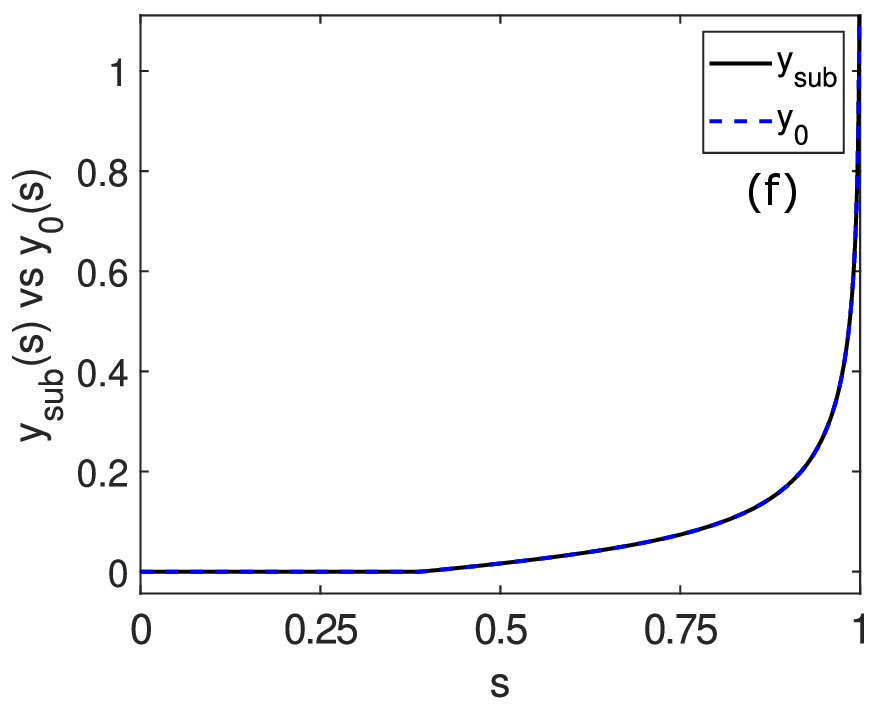}
            & \includegraphics[width=0.45\textwidth,height=0.41\textwidth]{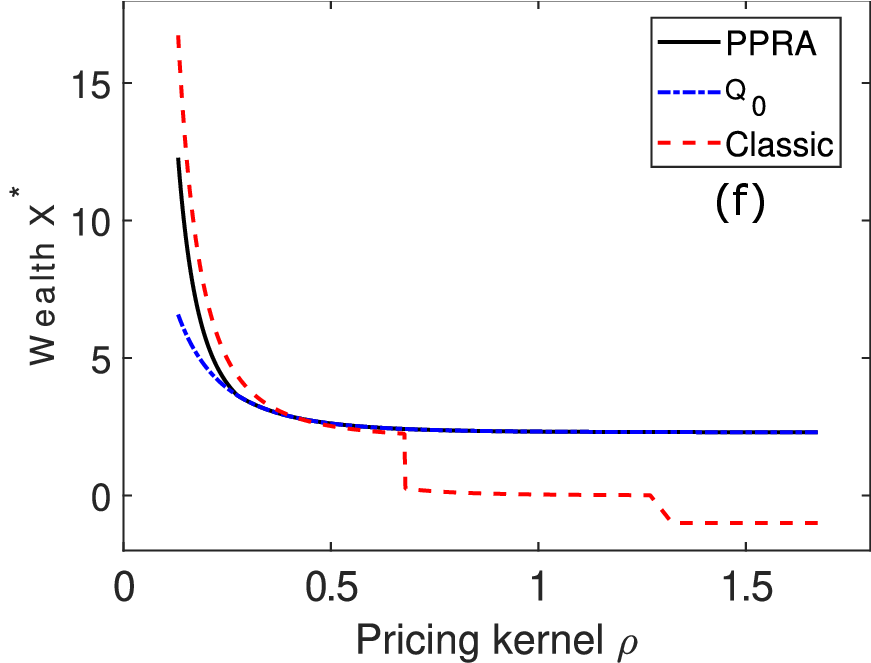}
        \end{matrix}
    \end{gather*}
\caption{Settings (d)(e)(f) in Table \ref{result_numerical}.}
\label{compose_2}
\end{figure}

Later, we will clarify the structural behavior of the correction function $y_{\text{sub}}(\cdot)$. The numerical experiments yield the following explanations and financial insights:
\begin{enumerate}
    \item Figure \ref{compose_1} (a)(b)(c): In a bad market scenario, the correction function $y_{\text{sub}}(\cdot)$ satisfies $y_{\text{sub}}(\cdot) = y_0(\cdot)$. Therefore, the PPRA wealth will coincide with the benchmark wealth. In this scenario, the benchmark serves as a risk-control mechanism, which is similar to Figure \ref{fig:SSD_power} (a)(b)(d).

    \item Figure \ref{compose_1} (c)(f): Compared with Figure \ref{compose_1} (a)(b), the region where $y_{\text{sub}}(\cdot) = y_0(\cdot)$ expands. Hence, under relatively stagnant or bad market scenario, the PPRA wealth tends to coincide with the benchmark wealth. In such scenario, the SSD constraint plays a more prominent role by ensuring a benchmark wealth to reduce risk.    

    \item Figure \ref{compose_2} (d): The correction function $y_{\text{sub}}(\cdot) $ equals $ y_0(\cdot)$ over certain intermediate intervals, causing the PPRA wealth to coincide with the benchmark in the mid-range market scenario, as shown in (d)'s right panel.

    \item Figure \ref{compose_2} (e): The correction function $y_{\text{sub}}(\cdot)$ equals $y_0(\cdot)$ at both endpoints of the interval $(0,1)$. This implies the PPRA wealth must coincide with the benchmark wealth in both extremely bad and exceptionally favorable market scenarios.
\end{enumerate}

In Figure \ref{compose_2} (e), we can observe how Steps 8-12 of Algorithm~\ref{alg} ensure that the constructed function $y_{\text{sub}}(\cdot)$ remains non-decreasing on $(0,1)$. The key idea is to modify the construction of $g_i(t)$ in Eq.~\eqref{eq:gi} so that both monotonicity requirements of $y_{\text{sub}}(\cdot)$ and the condition $z^*(\cdot) \le 0$ are satisfied throughout the region in which $y_{\text{sub}}(\cdot)$ is updated. In this example, the interval where $y_{\text{sub}}(\cdot) = y_{\text{sub}}(t_1)$ is extended into the region covered by the previous iteration (i.e., the first iteration), updating the original $y_{\text{sub}}(\cdot)$ in this region.

Across various combinations of the utilities and the benchmark quantiles $Q_0$, we observe that the poor-performance region typically consists of one or two disjoint intervals. The proposed algorithm exhibits strong adaptability under these diverse settings, effectively identifying the sub-optimal solution in most cases. At the same time, our algorithm is able to handle certain special cases in which the original piecewise construction fails to guarantee monotonicity. By modifying the construction of $g_i(t)$, the algorithm provides a valid sub-optimal solution and, as a result, restores monotonicity and extends the applicability of our approach to more complex configurations.

\section{Algorithm-Guided Piecewise-Neural-Network Framework}
\label{sec:NN}


In this section, we propose a novel approach to solve the SSD Problem \eqref{prob:SSD_qf} by designing a piecewise-neural-network-framework combined with Algorithm \ref{alg}. The key observation is that in Algorithm \ref{alg}, the poor performance region $C$ defined in Eq. \eqref{eq:relation_C_y0} and the construction of the correction function $y_{\text{sub}}(\cdot)$ provide valuable structure information about the optimal solution, thereby guiding the design of an effective neural network framework.

We begin by applying Algorithm \ref{alg}, which yields a correction function $y_{\text{sub}}(\cdot)$. Using $y_{\text{sub}}(\cdot)$, we derive a sub-optimal solution $Q_{\text{sub}}(\cdot)$ as presented in Eq. \eqref{eq:Q_algo}. 
To design a neural network framework, the key idea is that we use the structure information of the derived $Q_{\text{sub}}(\cdot)$ to build a piecewise-neural-network framework. Then we apply this framework to train a solution for Problem \eqref{prob:SSD_qf}.

\subsection{Model Settings}
\begin{algorithm}[htbp]\small
\caption{Algorithm-guided piecewise-neural-network framework for SSD Problem \eqref{nn:goal}}\label{alg_nn}
\begin{algorithmic}[1]
    \State Implement Algorithm \ref{alg} and get the sub-optimal solution structure as Eq. \eqref{eq:Q_algo}. Using the partition intervals $(s_k, s_{k+1}]$ defined in Eq. \eqref{eq:nn_Qsub}, initialize the neural network architecture as follows:
    \begin{equation}\label{eq:Qtheta}
    \begin{aligned}
    Q_\theta(s) &=
    \begin{cases}
    f^{(0)}_{\theta_0}(\mathbf{s}_{\text{feat}}(s)), & s \in (s_0, s_1], \\
    f^{(1)}_{\theta_1}(\mathbf{s}_{\text{feat}}(s)), & s \in (s_1, s_2], \\
    \;\vdots & \\
    f^{(K)}_{\theta_K}(\mathbf{s}_{\text{feat}}(s)), & s \in (s_K, s_{K+1}),
    \end{cases}
    \end{aligned}
    \end{equation}
    where $f^{(k)}_{\theta_k}$ denotes the $k$-th neural sub-network parameterized by $\theta_k$, the parameters of the integrated network $Q$ are denoted as $\theta = \{\theta_0, \theta_1, \dots, \theta_K\}$ and  $\mathbf{s}_{\text{feat}}(s)$ represents the Fourier feature embedding of the scalar input $s$ defined as
    \begin{equation}\label{eq:s_feat}
        \mathbf{s}_{\text{feat}}(s) = [\,\sin(2\pi s),\; \sin(4\pi s),\; \cos(2\pi s),\; \cos(4\pi s)\,]^\top \in \mathbb{R}^4.
    \end{equation}

    \State Define a prior function as follows:
    \begin{equation}\label{eq:prior}
    \begin{aligned}
        \phi(s) =
        \begin{cases}
            I(Q_\rho(1-s)), & s \in (s_0, s_1],\\
            Q_0(s), & s \in (s_1, s_2],\\
            \;\vdots & \\
            I(Q_\rho(1-s)), & s\in (s_K, s_{K+1}).
        \end{cases}
    \end{aligned}
    \end{equation}

    \State For each interval, update the sub-network output with the analytic prior term and activate the integrated network $Q_\theta(\cdot)$ as follows:
    $$
    Q_\theta(s) \leftarrow Q_\theta(s) + \phi(s), \;\; Q_\theta(s) \leftarrow \text{ReLU}(Q_\theta(s)).
    $$

    \State Define the objective function as the expected utility from Eq. \eqref{nn:goal}.
    Since the integral cannot be computed analytically, we approximate it by uniformly sampling $n$ points $s_i \in (0,1)$ and compute
    \begin{equation}\label{nn:obj}
        \mathcal{L}_{\text{obj}}(\theta) = \frac{1}{n} \sum_{i=1}^{n} U\big(Q_\theta(s_i)\big).
    \end{equation}

    \State Take the budget constraint ($C_1$) and SSD constraint ($C_2$) as the penalty of the loss function. Define
    $$
    C_1 = \frac{1}{n} \sum_{i=1}^n Q_\theta(s_{i}) Q_\rho(1 - s_i), \;\; C_2 = \max\left\{0, \max_{k=1,\dots,n} \left[ \frac{1}{k}\sum_{i=1}^k Q_0(s_{i}) - \frac{1}{k}\sum_{i=1}^k Q_\theta(s_i) \right] \right\}.
    $$

    \State Add weights $w_1, w_2$ to $C_1, C_2$ and calculate the loss function as follows
    \begin{equation}\label{eq:penalty}
    \mathcal{L}_{\text{p1}} = w_1 \, (C_1 - \bar{x})^2, \;\;\mathcal{L}_{\text{p2}} = w_2 \cdot C_2,
    \end{equation}

    $$
    \mathcal{L}_{\text{total}}(\theta) = -\mathcal{L}_{\text{obj}}(\theta) + \mathcal{L}_{\text{p1}} + \mathcal{L}_{\text{p2}}.
    $$

    \State Next, start the training process.
    \Require Neural sub-networks $\{f_{\theta_k}^{(k)}\}_{k=0}^K$, pricing kernel function $Q_\rho(\cdot)$, benchmark function $Q_0(\cdot)$, utility function $U(\cdot)$, budget $\bar{x}$, sample size $n$, learning rate $\eta$, number of Adam steps $A_s$, penalty weights $w_1, w_2$.
    \Ensure Trained network $Q_\theta(s)$.
    
    \State Sample $s_i \in (0,1)$, $i=1,\dots,n$.
    \State Compute Fourier features $\mathbf{s}_{\text{feat}}$ as Eq. \eqref{eq:s_feat}.
    
    \For{$k = 0$ to $K$}
        \State Initialize each sub-network: $f_{\theta_k}^{(k)}(\mathbf{s}_{\text{feat}}(s_i))$.
        \State Compute analytic prior $\phi(s_i)$ as Eq. \eqref{eq:prior}.
    \EndFor

    \State Merge the sub-networks to an integrated network $Q_\theta(s)$ as Eq. \eqref{eq:Qtheta}.

    \State Activate the network: $Q_\theta(s) \leftarrow \text{ReLU}(Q_\theta(s))$.

    \For{$i = 0$ to $A_s$}
        \State Compute the objective function: $    \mathcal{L}_{\text{obj}}(\theta) = \frac{1}{n} \sum_{i=1}^{n} U\big(Q_\theta(s_i)\big)$.
        
        \State Compute constraint penalties $\mathcal{L}_{\text{p1}}$ and $\mathcal{L}_{\text{p2}}$ as Eq. \eqref{eq:penalty}.

        \State Compute the total loss in current iteration: $    \mathcal{L}_{\text{total}}(\theta) = -\mathcal{L}_{\text{obj}}(\theta) + \mathcal{L}_{\text{p1}} + \mathcal{L}_{\text{p2}}$.
        
        \State \textbf{Adam update:} $\theta \gets \theta - \eta \nabla_\theta \mathcal{L}_{\text{total}}(\theta)$.
    \EndFor
    
    \State \Return Trained network $Q_\theta(s)$.

\end{algorithmic}
\end{algorithm}
The optimization problem to be approximated by the neural network is formulated as follows:
\begin{equation}\label{nn:goal}
	\begin{aligned}
		& \max_{Q \in \Q_2(Q_0)} \int_{0}^{1} U(Q(s)) \d s ~~ \text{s.t. } \int_{0}^{1} Q(s) Q_\rho(1-s) \d s \leq \overline{x},
	\end{aligned}
\end{equation}
which is identical to Problem (\ref{prob:SSD}).

Based on Algorithm \ref{alg}, we obtain the suboptimal solution $Q_{\text{sub}}(\cdot)$ as given in Eq. (\ref{eq:Q_algo}), which exhibits a piecewise structure. Specifically, $Q_{\text{sub}}(\cdot)$ can be expressed as

\begin{equation}\label{eq:nn_Qsub}
\begin{aligned}
Q_{\text{sub}}(s) &=
\begin{cases}
I(\lambda(Q_\rho(1-s) - y(\cdot)), & s \in (s_0, s_1], \\
Q_0(s), & s \in (s_1, s_2], \\
\;\vdots & \\
I(\lambda(Q_\rho(1-s) - y(\cdot)), & s \in (s_K, s_{K+1}),
\end{cases}
\end{aligned}
\end{equation}
where $s_0 = 0, s_{K+1} = 1.$


Following this structure, we construct a set of neural sub-networks 
$\{f^{(k)}_{\theta_k}\}_{k=0}^K$ to approximate each interval respectively. The main design considerations are summarized as follows:
\begin{enumerate}[a.]
    \item In Step 1, the piecewise formulation ensures that $Q_\theta(s)$ inherits the partition interval of $Q_{\text{sub}}(\cdot)$, enabling the network to better capture the structural behavior of the optimal solution. Specifically, each sub-network $f^{(k)}_{\theta_k}$ is a fully connected feedforward network with 8 hidden layers, each consisting of 256 neurons with Tanh activations.

    \item In Step 2, we introduce an analytical prior term into each sub-network $f_{\theta_k}^{(k)}$. The design of $\phi(s)$ is inspired by the structure of $Q_{\text{sub}}(\cdot)$, preserving essential information from the solution of Algorithm~\ref{alg} as a prior function and helps the network to capture the intrinsic information of the optimal solution.

    \item In Steps 5-6, we square the violation of the budget constraint to represent its penalty, ensuring the budget stays close to $\overline{x}$.  
    For the SSD constraint, we penalize the maximal violation to strictly enforce its satisfaction. The relative importance of satisfying the constraint during training can be adjusted by tuning its associated weight $w_1, w_2$.

    \item In Steps 8-22, the approximation accuracy of the objective function and constraints can be improved by increasing the number of sampled points $n$. The Adam optimizer is used to update the network parameters, and the number of Adam steps $A_s$ determines how many gradient-based updates are performed during training, thus enhancing the convergence of the network.
\end{enumerate}

\subsection{Experimental Results}
To validate the model, we first conduct experiments under conditions in which the optimal solution $Q^*(\cdot)$ is available. This optimality holds because Algorithm \ref{alg} results numerically follow the characterization in Theorem \ref{thm:SSD}. We use the setup in Section \ref{subsection:ssd_numerical}, which considers both the exponential utility and log utility under a linear SSD constraint with benchmark quantile $Q_0$ (Table \ref{setup_numerical} (a)(c)).





We follow the steps in Algorithm \ref{alg_nn} to get a trained neural network $Q_\theta$, and the results are summarized in Table \ref{nn_result}.

\begin{table}[htbp]
\centering
\begin{tabular}{c|c|c|c|c}
\hline
& utility & distribution of $Q_0$ & optimal value & neural network value\\
\hline
(i) & exponential & uniform  & -0.8965 & -0.8990\\
\hline
(ii) & log & uniform  & 1.4781 & 1.4686 \\
\hline
\end{tabular}
\caption{Objective value for neural network approximation.}
\label{nn_result}
\end{table}

\begin{figure}[htbp]
    \begin{gather*}
        \tiny
        \begin{matrix}
            \includegraphics[width=0.45\textwidth, height=0.31\textheight]{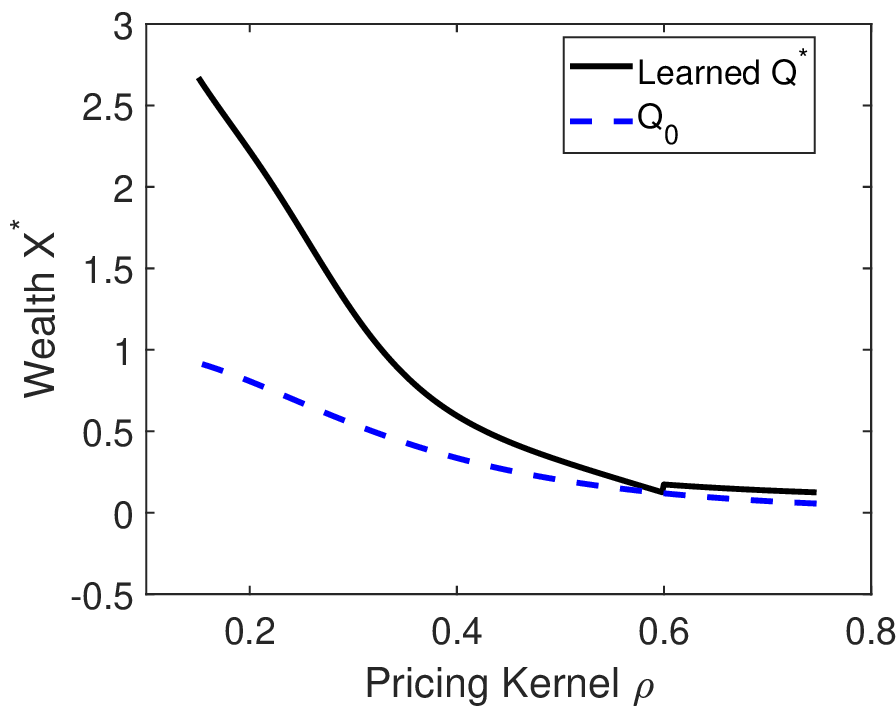}
            & \includegraphics[width=0.45\textwidth,height=0.31\textheight]{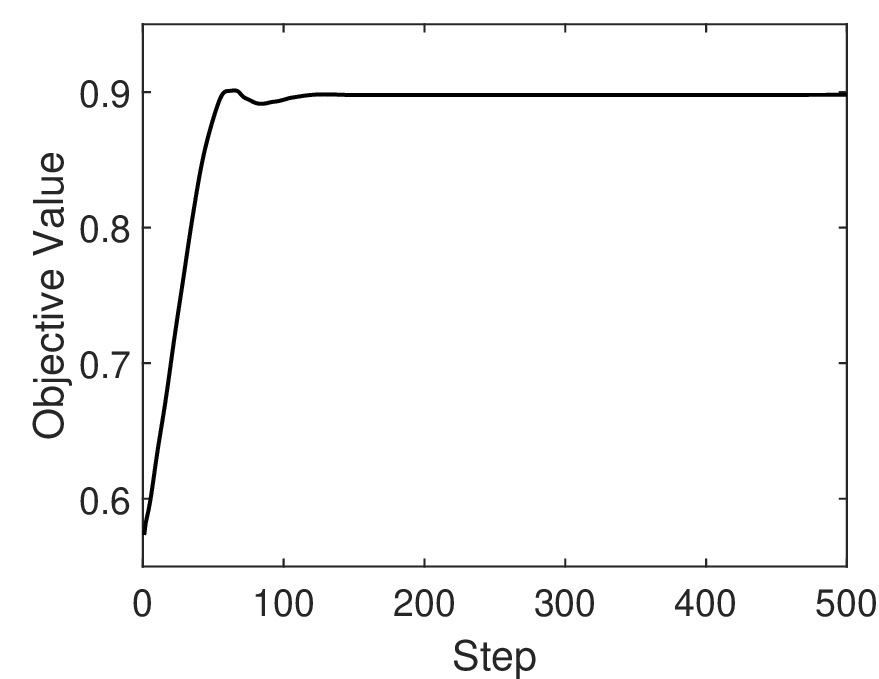}
        \end{matrix}
    \end{gather*}
    \begin{gather*}
        \tiny
        \begin{matrix}
            \includegraphics[width=0.45\textwidth, height=0.31\textheight]{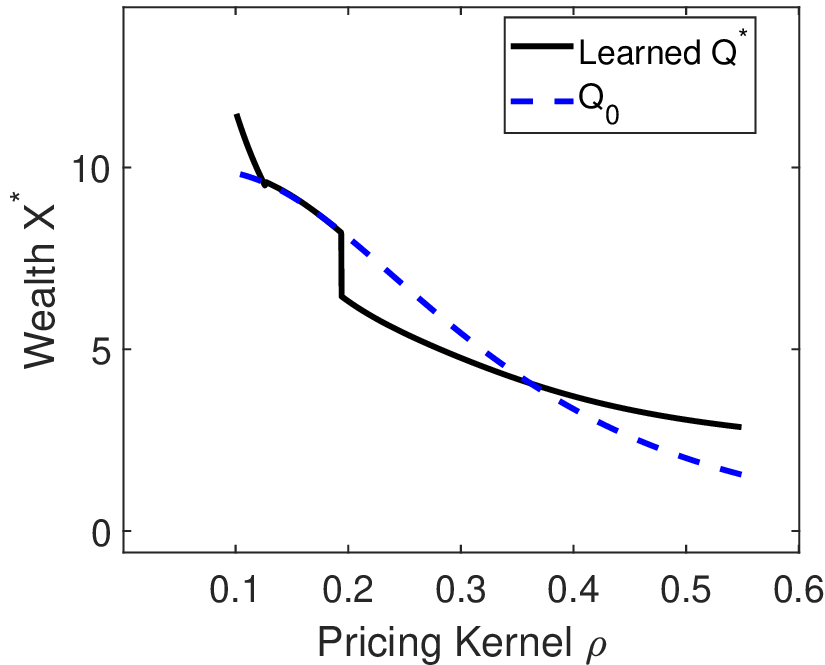}
            & \includegraphics[width=0.45\textwidth,height=0.31\textheight]{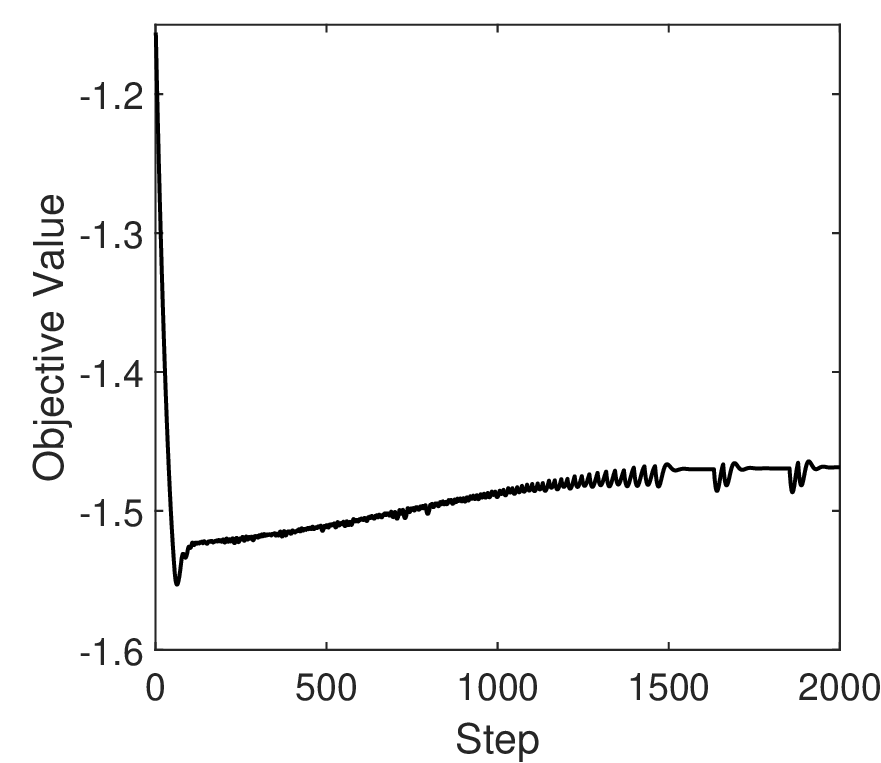}
        \end{matrix}
    \end{gather*}
\caption{Neural network approximation results.}
\label{nn_part1}
\end{figure}

Here are some illustrations of the neural network results:
\begin{enumerate}
    \item Figure \ref{nn_part1} (i)(ii)'s left panel: Compared with Figure \ref{compose_1} (a) and Figure \ref{compose_2} (d), the algorithm-guided piecewise-neural-network framework successfully preserves the structure features of the optimal solution, achieving a close match.

    \item Convergence behavior: The network typically converges within 100-2000 steps. As shown in Table \ref{nn_result}, our algorithm achieves high numerical accuracy relative to the optimal solution, which is reflected in the objective value $\int_{0}^{1} U(Q(s)) \d s$ in Problem \eqref{nn:goal}.
    
    \item Figure \ref{nn_part1} (i)(ii)'s right panel: The convergence speed depends on the number of sub-networks: the exponential case uses 2 sub-networks, while the log case uses 3, which also reflects the problem complexity. In simpler cases (Figure \ref{nn_part1} (i)), the convergence curves are smooth and rapid because the two constraints in Problem \eqref{nn:goal} are quickly fulfilled. In more complex cases (Figure \ref{nn_part1} (ii)), the network spends more time mitigating constraint penalties, resulting in slower convergence.
\end{enumerate}

Here we come to the conclusion that our approach, first deriving a sub-optimal solution via the proposed algorithm and then leveraging it to guide the neural network design, effectively captures the essential properties of the optimal solution and provides stable, high-quality performance for the SSD problem with strictly with concave utilities.


Remarkably, our algorithm-guided piecewise-neural-network framework remains stable under a nonconcave utility and achieves rapid convergence in Figure \ref{nn_part2}. By leveraging sub-optimal solutions obtained from our proposed Algorithm \ref{alg} as analytic priors and structuring the network in a piecewise manner, the framework effectively reduces the functional search space and guides the optimization toward regions that fulfill the SSD constraints. In contrast, a standard monolithic-neural-network (non-piecewise) framework exhibits substantially slower convergence, often requiring tens of times more training steps to reach convergence.

To illustrate this, we consider an S-shaped utility following the setup in Section \ref{subsection:Sshaped} (Table \ref{nume:SU} (b)). Solutions obtained with our Algorithm \ref{alg_nn} are compared against those from a standard monolithic-neural-network framework using identical parameter settings.

\begin{table}[htbp]
\centering
\begin{tabular}{c|c|c|c|c}
\hline
& network & neural network value  & $C_1$ satisfied steps & $C_2$ satisfied steps \\
\hline
(i) & piecewise & 14.7531  & 83 & 10\\
\hline
(ii) & standard & 12.0320 & 10950 & 4279  \\
\hline
\end{tabular}
\caption{Convergence steps and constraint satisfied steps.}
\label{convergenceTable}
\end{table}

\begin{figure}[htbp]
    \begin{gather*}
        \tiny
        \begin{matrix}
            \includegraphics[width=0.45\textwidth, height=0.3\textheight]{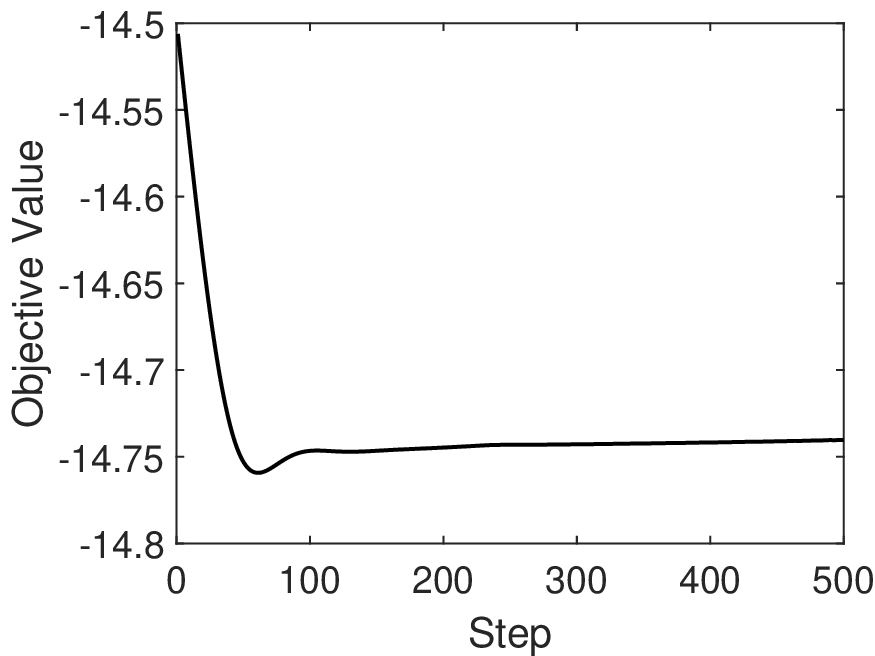}
            & \includegraphics[width=0.45\textwidth,height=0.3\textheight]{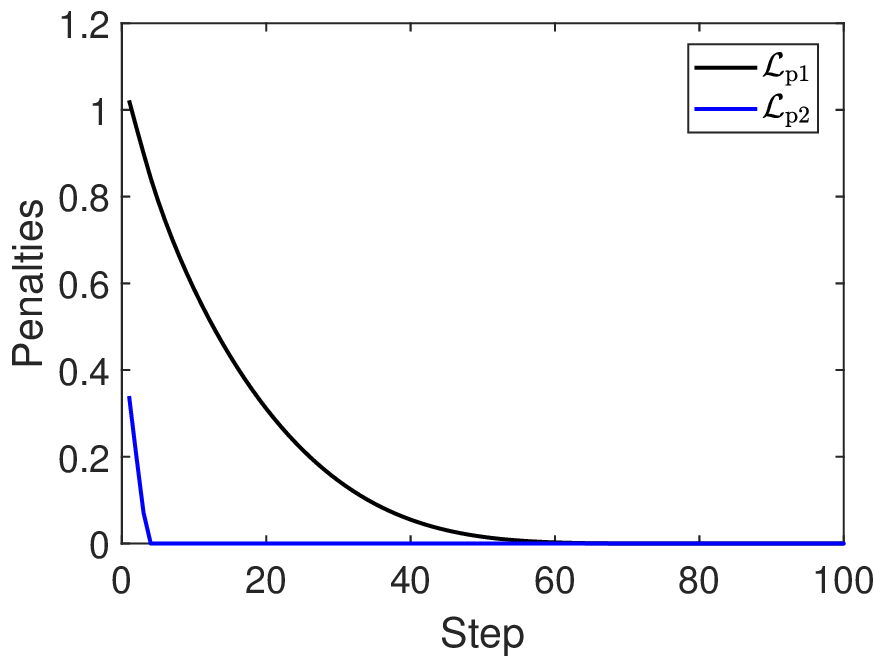}
        \end{matrix}
    \end{gather*}
    \begin{gather*}
        \tiny
        \begin{matrix}
            \includegraphics[width=0.45\textwidth, height=0.3\textheight]{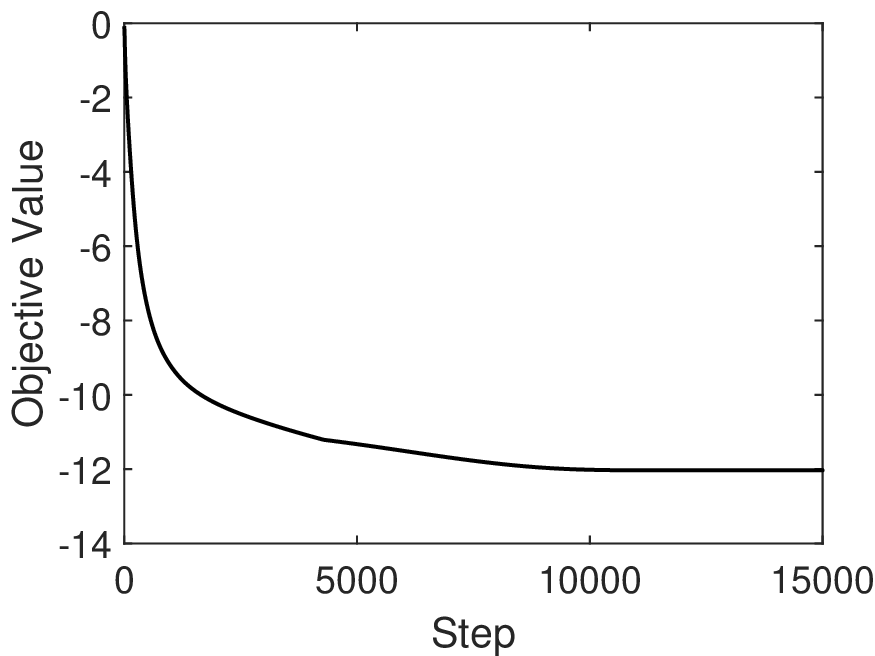}
            & \includegraphics[width=0.45\textwidth,height=0.295\textheight]{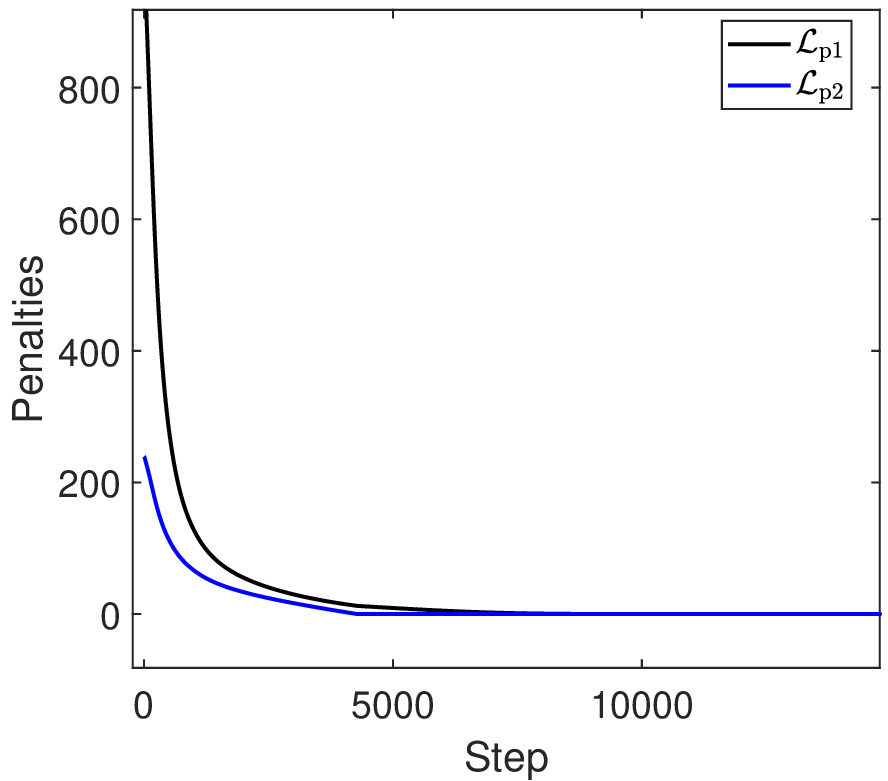}
        \end{matrix}
    \end{gather*}
\caption{Convergence behavior and penalty updates.}
\label{nn_part2}
\end{figure}

Here, we illustrate the convergence behavior of different neural network designs:
\begin{enumerate}
    \item Table \ref{convergenceTable}: The neural network value corresponds to the objective value $\int_{0}^{1} U(Q(s)) \, \mathrm{d}s$ in Problem \eqref{nn:goal}. The table also reports the minimum number of steps required to satisfy the SSD and budget constraints for each network design.

    \item Figure \ref{nn_part2}: Our algorithm-guided piecewise-neural-network framework reaches near-convergence in roughly 300 Adam steps, whereas a standard monolithic-neural-network requires approximately 10,000 steps to achieve comparable near-convergence. This dramatic acceleration highlights the efficiency of our approach. Furthermore, our piecewise framework achieves a higher sub-optimal objective value compared with the standard framework, indicating that the standard framework is prone to getting trapped in local optima, while our piecewise framework can avoid such traps and attain superior performance.

    \item Table \ref{convergenceTable} (i) vs (ii): The significant speedup under the piecewise framework primarily stems from rapid satisfaction of the penalty terms. By incorporating the analytic prior inspired from Algorithm \ref{alg} into the network design, our piecewise framework enforces the SSD and budget constraints more efficiently, allowing the neural network to converge quickly while maintaining better approximation of the sub-optimal solution.
\end{enumerate}

In summary, our study demonstrates the effectiveness of the algorithm-guided piecewise-neural-network framework for solving the SSD problem. For cases where an analytical optimal solution exists, our framework achieves results that closely match the optimal solution both numerically and structurally, indicating its high precision. From the perspective of convergence, our framework significantly accelerates the optimization process and is capable of escaping local optima, thereby attaining superior performance compared with the standard monolithic-neural-network framework. Our findings highlight efficiency and robustness of our approach across different model settings and suggest strong potential for extension to more complex scenarios.


\section{Conclusion}\label{sec:conc}
We study a utility maximization problem under stochastic dominance constraints. Starting from an S-shaped utility function, we derive the explicit optimal solution without a liquidation boundary under first-order stochastic dominance (FSD) constraints. 
For the more challenging SSD problem, particularly under non-concavity, obtaining an explicit optimal solution is inherently difficult. Motivated by the structural properties of the analytical theorem in the concave case, we introduce a Poor-Performance-Region Algorithm (PPRA). This algorithm efficiently identifies the candidate structure that a potential optimal solution should satisfy and proves effective in the vast majority of cases. Extensive numerical experiments illustrate how the algorithm operates, confirm its broad applicability, and demonstrate its capability to handle the few exceptional cases. Building on the structural insights provided by the PPRA, we further recognize the potential of neural networks in tackling SSD problems under non-concavity. While neural networks offer strong approximation capabilities, their direct application often suffers from slow convergence and severe sensitivity to local optima induced by non-concavity. By leveraging the PPRA’s ability to capture the essential structure of the optimal solution, we develop an algorithm-guided piecewise-neural-network framework that integrates these structural cues into the learning process. Compared with a standard neural-network approach, this framework achieves significantly faster convergence and effectively avoids undesirable local minima, thereby delivering consistently superior solution quality.

\subsubsection*{Acknowledgement} 
The authors are grateful to Jianming Xia and 
members of the research group on financial mathematics and risk management at The Chinese University of Hong Kong, Shenzhen for their insightful discussions and conversations. 
Y. Liu acknowledges financial support from the National Natural Science Foundation of China (Grant No. 12401624), The Chinese University of Hong Kong (Shenzhen) University Development Fund (Grant No. UDF01003336) and Shenzhen Science and Technology Program (Grant No. RCBS20231211090814028, JCYJ20250604141203005, 2025TC0010) and is partly supported by the Guangdong Provincial Key Laboratory of Mathematical Foundations for Artificial Intelligence (Grant No. 2023B1212010001). 

\end{document}